\setlist[itemize]{itemsep=3pt,parsep=0pt,topsep=2pt}
\newcommand*\patchAmsMathEnvironmentForLineno[1]{
  \expandafter\let\csname old#1\expandafter\endcsname\csname #1\endcsname
  \expandafter\let\csname oldend#1\expandafter\endcsname\csname end#1\endcsname
  \renewenvironment{#1}
     {\linenomath\csname old#1\endcsname}
     {\csname oldend#1\endcsname\endlinenomath}}
\newcommand*\patchBothAmsMathEnvironmentsForLineno[1]{
  \patchAmsMathEnvironmentForLineno{#1}
  \patchAmsMathEnvironmentForLineno{#1*}}
\theoremstyle{definition}
\newtheorem{theorem}{Theorem}
\newtheorem{lemma}{Lemma}
\newtheorem{proposition}{Proposition}
\newtheorem{observation}{Observation}
\newtheorem{corollary}{Corollary}
\newtheorem*{claim}{Claim}
\crefname{observation}{Observation}{Observations}
\crefname{algocf}{Algorithm}{Algorithms}
\newenvironment{claimproof}[1]{\par\noindent\textit{Proof of Claim.}\space#1}{\hfill $\blacksquare$}
\title{Forcing a unique minimum spanning tree and a unique shortest path}
\author[1]{Tatsuya Gima}
\author[2]{Andreas Grigorjew}
\author[1]{Yasuaki Kobayashi}
\author[3]{Michael Lampis}
\author[3]{Yiren Lu}
\author[4]{Valia Mitsou}
\author[3]{Edouard Nemery}
\author[1]{Sato Takumi}
\author[5,6]{Yuto Okada}
\author[5]{Yota Otachi}
\affil[1]{Hokkaido University, Japan}
\affil[2]{Institute of Informatics, University of Warsaw, Poland}
\affil[3]{LAMSADE, Universit\'e Paris-Dauphine, PSL University, CNRS UMR7243, France}
\affil[4]{Universit\'e Paris Cit\'e, IRIF, CNRS, France}
\affil[5]{Nagoya University, Japan}
\affil[6]{Research Fellow of Japan Society for the Promotion of Science, Japan}
\date{}
\begin{document}
\maketitle

\begin{abstract}
A \emph{forcing set} $S$ in a combinatorial problem is a set of elements such that there is a unique solution that contains all the elements in $S$.
An \emph{anti-forcing set} is the symmetric concept: a set $S$ of elements is called an anti-forcing set if there is a unique solution disjoint from $S$.
There are extensive studies on the computational complexity of finding a minimum forcing set in various combinatorial problems, and the known results indicate that many problems would be harder than their classical counterparts: the decision version of finding a minimum forcing set for perfect matchings is \NP-complete [Adams et al., Discret. Math. 2004] and that of finding a minimum forcing set for satisfying assignments for 3CNF formulas is $\mathrm{\Sigma}_2^\cP$-complete [Hatami-Maserrat, DAM 2005].
In this paper, we investigate the complexity of the problems of finding minimum forcing and anti-forcing sets for the shortest $s$-$t$ path problem and the minimum weight spanning tree problem.
We show that, unlike the aforementioned results, these problems are tractable, with the exception of the decision version of finding a minimum anti-forcing set for shortest $s$-$t$ paths, which is \NP-complete.
To complement this intractability, we design FPT algorithms for finding a minimum anti-forcing set for shortest $s$-$t$ paths.
\end{abstract}

\section{Introduction}\label{sec:introduction}
This paper focuses on the problem of uniquely determining a solution by prescribing a subset of the solution (or of its complement).
This concept has been extensively studied for various combinatorial problems, under several different names, such as (anti-)forcing number~\cite{AdamsMM04,AfshaniHM04,HararyKZ91,ZhangHLZ25,HararySV07}, critical sets~\cite{GhandehariHM05}, and defining sets~\cite{HatamiM05}.

Harary, Klein, and \v{Z}ivkovi\'c~\cite{HararyKZ91} formulated the \emph{forcing number} (for perfect matchings) of a graph, which is the smallest cardinality of an edge subset such that there is exactly one perfect matching including it.
Prior to this work, the concept has been studied in chemistry as the innate degree of freedom of a Kekul\'e structure~\cite{KleinR87} and gained attention through several studies~\cite{AdamsMM04,AfshaniHM04,ZhangHLZ25}, including a recent survey article \cite{ZhangHLZ25} on the forcing number and on its ``dual'' notion, the anti-forcing number~\cite{DengZ17}.
From the computational perspective, deciding whether the forcing number of an input graph is at most $k$ is known to be \NP-complete even on bipartite graphs~\cite{AfshaniHM04}.
Similarly, given a bipartite graph $G$ and its perfect matching $M$, it is \NP-complete to decide whether there is a subset $S \subseteq M$ of size at most $k$ such that every perfect matching $M'$ of $G$ other than $M$ does not include $S$ entirely (i.e., $S \setminus M' \neq \varnothing$)~\cite{AdamsMM04}.
Deng and Zhang~\cite{DengZ17} proved that, given a bipartite graph $G$ and its perfect matching, it is \NP-complete to decide whether there is a subset $S \subseteq E(G) \setminus M$ of size at most $k$ such that $M$ is the unique perfect matching in $G - S$.
In a more general context, this problem can be viewed as finding a smallest partial solution that is uniquely extended to a (complete) solution.
Viewed from this broader perspective, the computational complexity of such problems has been investigated for various combinatorial problems, such as graph colorings~\cite{HatamiM05}, satisfiability~\cite{HatamiM05,DemaineMSWA16}, puzzles~\cite{DemaineMSWA16,KimuraKF18}, and the minimum vertex cover problem~\cite{HoriyamaKOSS24,HoriyamaSSS25,AnCCKL0S25}.
\cref{tbl:summary} shows several known results.
An important observation is that the computational complexity of (most of) these problems would be harder than that of their classical setting.

\begin{table}[t]\caption{A summary of known and our results.\tablefootnote{Since the solutions of \textsc{3SAT}, \textsc{2SAT}, and \textsc{Vertex Coloring} do not form subsets, these results cannot be directly regarded as results in the forcing model.
However, since these problems deal with unique extensions of partial assignments on subsets of variables or vertices, it is still possible to interpret them as results in the forcing model. (See the discussion in \cite{HatamiM05}.)}}
\label{tbl:summary}
\centering
{\renewcommand{\arraystretch}{1.2}
\begin{tabular}{c|c|c|c|}
\cline{2-4}
\multicolumn{1}{l|}{}   & \multicolumn{1}{c|}{classical} & \multicolumn{1}{c|}{forcing} & \multicolumn{1}{c|}{anti-forcing} \\ \cline{1-4}
\multicolumn{1}{|c|}{\textsc{Perfect Matching}} & \cP            & \NP-complete \cite{AfshaniHM04} & unknown \\ \cline{1-4}
\multicolumn{1}{|c|}{\textsc{3SAT}}             & \NP-complete& $\mathrm{\Sigma}^\cP_2$-complete \cite{HatamiM05} &  \textemdash\\ \cline{1-4}
\multicolumn{1}{|c|}{\textsc{2SAT}}             & \cP            & \NP-complete \cite{DemaineMSWA16} &                         \textemdash \\ \cline{1-4}
\multicolumn{1}{|c|}{\textsc{Vertex Coloring}}         & \NP-complete & $\mathrm{\Sigma}^\cP_2$-complete \cite{HatamiM05}&     \textemdash     \\ \cline{1-4}
\multicolumn{1}{|c|}{\textsc{Vertex Cover}}     & \NP-complete & \multicolumn{2}{c|}{$\mathrm{\Sigma}^\cP_2$-complete \cite{HoriyamaKOSS24}}  \\ \cline{1-4}
\multicolumn{1}{|c|}{\textsc{Bipartite Vertex Cover}} & \cP      & \multicolumn{2}{c|}{\NP-complete \cite{HoriyamaKOSS24}}                  \\ \cline{1-4}
\multicolumn{1}{|c|}{\textsc{Shortest $s$-$t$ Path}}         & \cP & \cP~(Thm. \ref{thm:inclusion-forcing-set-stpath}) &      \NP-complete~(Thm. \ref{thm:exclude-nph}) \\ \cline{1-4}
\multicolumn{1}{|c|}{\textsc{Minimum Spanning Tree}}         & \cP & \multicolumn{2}{c|}{\cP~(Thm. \ref{thm:mst:main})} \\ \cline{1-4}
\end{tabular}
}
\end{table}

We study the computational complexity of the problems of finding a minimum (anti-)forcing set for well-known tractable combinatorial problems, namely the shortest $s$-$t$ path problem and the minimum-weight spanning tree problem.
More specifically, let $G$ be a directed graph with $s, t \in V(G)$ and let $w\colon E(G) \to \mathbb R_{+}$ be an edge-weight function.
We say that $S \subseteq E(G)$ is a \emph{forcing set for shortest $s$-$t$ paths} if there is exactly one shortest path $P$ from $s$ to $t$ in $(G, w)$ such that $S \subseteq E(P)$.
Similarly, $S \subseteq E(G)$ is an \emph{anti-forcing set for shortest $s$-$t$ paths} if there is exactly one shortest path $P$ from $s$ to $t$ in $(G, w)$ such that $S \cap E(P) = \varnothing$.

Our first two problems are stated formally as follows.
\begin{tcolorbox}
\begin{description}
  \setlength{\itemsep}{0pt}
  \item[Problem:] {\shortincsssp} / {\shortexcsssp}
  \item[Input:] A weighted digraph $G$ with $w\colon E(G) \to \mathbb R_{+}$, $s,t \in V(G)$, and $k \in \mathbb N$.
  \item[Goal:] Decide whether there is a forcing set / anti-forcing set $S \subseteq E(G)$ with $|S| \le k$ for shortest $s$-$t$ paths in $(G, w)$.
\end{description}
\end{tcolorbox}

Similarly, we can define forcing and anti-forcing sets for minimum weight spanning trees.
They are defined as follows.
\begin{tcolorbox}
\begin{description}
  \setlength{\itemsep}{0pt}
  \item[Problem:] {\shortincmst} / {\shortexcmst}
  \item[Input:] A weighted graph $G$ with $w\colon E(G) \to \mathbb R$.
  \item[Goal:] Decide whether there is a forcing set / anti-forcing set $S \subseteq E(G)$ with $|S| \le k$ for minimum weight spanning trees in $(G, w)$.
\end{description}
\end{tcolorbox}

Let us note that the weight of an edge can be negative in the latter two problems, while the former two problems require the weight to be positive.

It is not hard to see that these four problems belong to \NP, as we can check in polynomial time, given an edge set $S$, whether there is a unique optimal solution including or avoiding it.
This motivates us to investigate whether these problems are \NP-complete or polynomial-time solvable.

For {\shortincmst} and {\shortexcmst}, we can naturally generalize these problems to the settings in matroids.
This generalization allows us to handle these two problems in a uniform way.
In particular, we observe that the problem of finding a minimum forcing set for minimum weight bases in a matroid $M$ is equivalent to that of finding a minimum weight anti-forcing set for maximum weight bases in the dual matroid $M^*$.
We design a polynomial-time algorithm for computing a minimum anti-forcing set for maximum weight bases in a matroid $M$, assuming that $M$ is given as an independence oracle (see \cref{sec:preliminaries} for details).
This yields polynomial-time algorithms for {\shortincmst} and {\shortexcmst} (\cref{thm:mst:main}).
Our polynomial-time algorithm exploits a well-known greedy algorithm for the maximum weight basis problem.

For {\shortincsssp} and {\shortexcsssp}, we can observe an intriguing distinction between them: {\shortincsssp} is polynomial-time solvable (\cref{thm:inclusion-forcing-set-stpath}), while {\shortexcsssp} is \NP-complete (\cref{thm:exclude-nph}).
To the best of our knowledge, this is the first example in which there is a complexity gap between the two models (see \cref{tbl:summary}).
Our polynomial-time algorithm for {\shortincsssp} is based on the simple reduction that reduces the problem on general weighted directed graphs to that on unweighted directed acyclic graphs.
This reduction enables us to design a dynamic programming algorithm.
For {\shortexcsssp}, we give a polynomial-time reduction from {\mcbiclique}, proving that it is W[1]-hard when parameterized by the distance between $s$ and $t$.
To overcome this intractability, we design several FPT algorithms for {\shortexcsssp}.

\section{Preliminaries}\label{sec:preliminaries}
Let $E$ be a finite set.
A \emph{property} is a subset $\mathrm{\Pi} \subseteq 2^E$.
For $X \in \mathrm{\Pi}$, a set $S \subseteq E$ is called a \emph{forcing set for $X$} if it holds that $S \subseteq X$ and $S \not\subseteq X'$ for every $X' \in \mathrm{\Pi} \setminus \{X\}$.
In other words, $S$ is a forcing set of $X$ if $X$ is the unique set in $\mathrm{\Pi}$ that contains $S$.
Similarly, for $X \in \mathrm{\Pi}$, a set $S \subseteq E$ is called an \emph{anti-forcing set for $X$} if it holds that $S \cap X = \varnothing$ and $S \cap X' \neq \varnothing$ for every $X' \in \mathrm{\Pi} \setminus \{X\}$.

A set $S \subseteq E$ is called a \emph{forcing set} (resp.~\emph{anti-forcing set}) \emph{for property $\mathrm{\Pi} \subseteq 2^E$} if $S$ is a forcing set (resp.~anti-forcing set) for some $X \in \mathrm{\Pi}$.
We sometimes omit the target property $\mathrm{\Pi}$ when it is clear from the context.

For a (directed) graph $G$, we denote its vertex set by $V(G)$ and its edge set by $E(G)$.
For an edge set $X \subseteq E(G)$, $G - X$ denotes the graph obtained from $G$ by deleting all edges in $X$.
Let $G$ be a directed graph.
For a vertex $v \in V(G)$, $N^-(v)$ denotes the set of in-neighbors $\qty{u : (u,v) \in E(G)}$ of $v$.
For a directed edge $e = (u,v) \in E(G)$, $\head(e)$ and $\tail(e)$ denote $v$ and $u$, respectively.

\paragraph{Matroids}

Let $E$ be a finite set and let $\mathcal I \subseteq 2^E$.
A pair $M = (E, \mathcal I)$ is called a \emph{matroid} if $\mathcal I$ is nonempty; for $X \in \mathcal I$, every subset of $X$ belongs to $\mathcal I$; and for $X, Y \in \mathcal I$ with $|Y| > |X|$, $Y \setminus X$ has an element $e$ such that $X \cup \{e\} \in \mathcal I$.
Each set in $\mathcal I$ (resp.~not in $\mathcal I$) is said to be \emph{independent} (resp.~\emph{dependent}) in $M$.
A maximal independent set is called a \emph{basis} and a minimal dependent set is called a \emph{circuit} in $M$.
Every matroid has the following \emph{symmetric basis-exchange property}~\cite{Brualdi_1969}: for any two bases $A$ and $B$ and for $a\in A\setminus B$, there is $b \in B \setminus A$ such that both $(A\cup\{b\}) \setminus\{a\}$ and $(B \cup\{a\}) \setminus \{b\}$ are bases.
The collection of all bases of $M$ is denoted by $\mathcal B(M)$.
A \emph{loop} in $M$ is an element $e \in E$ such that $\{e\}$ is dependent.

\begin{proposition}[Corollary 1.2.6 in~\cite{Oxley92}] \label{prop:circuit-basis}
    Let $B \in \mathcal B(M)$ and $e \notin B$.
    Then, $B \cup \{e\}$ contains a unique circuit $C$ of $M$.
    Moreover, we have $(B \cup \{e\}) \setminus \{e'\} \in \mathcal B(M)$ for $e' \in C$.
\end{proposition}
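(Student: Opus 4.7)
The plan is to handle existence and uniqueness of the circuit $C$ separately, and then deduce the ``moreover'' clause from uniqueness. For \emph{existence}, since $B$ is a maximal independent set and $e \notin B$, the set $B \cup \{e\}$ is dependent. Invoking the standard fact that every dependent set contains a minimal dependent subset (obtained by greedily removing elements while preserving dependence), I obtain a circuit $C \subseteq B \cup \{e\}$. Any such circuit must contain $e$, for otherwise $C \subseteq B$ would contradict the independence of $B$; in particular $C \setminus \{e\} \subseteq B$.

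For \emph{uniqueness}, I would argue by contradiction. Suppose $B \cup \{e\}$ contained two distinct circuits $C_1 \neq C_2$; by the previous paragraph, $e \in C_1 \cap C_2$, and one can pick $f \in C_1 \setminus C_2$. The goal is to locate a circuit inside $(C_1 \cup C_2) \setminus \{e\}$, which would contradict the independence of $B$, since this set lies in $B$. This is the classical circuit-elimination step; because the paper presents only the exchange form of the matroid axioms, I would derive it by taking a maximum independent subset $J \subseteq (C_1 \cup C_2) \setminus \{e\}$ and applying the exchange axiom against the independent set $C_1 \setminus \{f\}$, exploiting that $C_2 \setminus \{e\}$ cannot be fully contained in $J$ (as otherwise $f \in J$ would reinstate the circuit $C_1$ inside $B$). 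Converting this into a formal contradiction by a careful cardinality comparison via the exchange axiom is the main obstacle of the proof.

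For the \emph{moreover} clause, fix $e' \in C$ and set $B' := (B \cup \{e\}) \setminus \{e'\}$. If $B'$ were dependent it would contain some circuit $C'$, and $C' \subseteq B \cup \{e\}$ together with $e' \notin C'$ would yield $C' \neq C$, contradicting the uniqueness just established. Hence $B'$ is independent, and $|B'| = |B|$. Since all bases of a matroid share a common cardinality (a standard consequence of the exchange axiom, obtained by applying it to two bases of differing sizes to produce a strictly larger independent set than the smaller basis), $B'$ is itself a basis, completing the proof.
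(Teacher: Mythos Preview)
The paper does not prove this proposition; it is quoted as Corollary~1.2.6 of Oxley's textbook and used as a black box, so there is no ``paper's proof'' to compare against. Your plan is the standard textbook argument for the fundamental-circuit property and is essentially correct.

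One wrinkle: your sketch of the uniqueness step is slightly garbled. With $f \in C_1 \setminus C_2$ and $J$ a maximum independent subset of $(C_1 \cup C_2)\setminus\{e\}$, the parenthetical claim that ``$f \in J$ would reinstate the circuit $C_1$'' does not hold, since $e \in C_1$ but $e \notin J$, so $C_1 \not\subseteq J$ regardless. The clean version of circuit elimination runs inside $C_1 \cup C_2$ itself: pick $f \in C_2 \setminus C_1$, extend the independent set $C_2 \setminus \{f\}$ to a maximal independent subset $I$ of $C_1 \cup C_2$; then $f \notin I$ (else $C_2 \subseteq I$) and some $g \in C_1 \setminus I$ also exists (else $C_1 \subseteq I$), with $g \neq f$ because $f \notin C_1$. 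Hence $|I| \le |C_1 \cup C_2| - 2$. If $(C_1 \cup C_2)\setminus\{e\}$ were independent it would have size $|C_1 \cup C_2| - 1 > |I|$, and the exchange axiom would let you enlarge $I$ within $C_1 \cup C_2$, contradicting maximality. This yields a circuit inside $(C_1 \cup C_2)\setminus\{e\} \subseteq B$, the desired contradiction. The remainder of your plan (existence of $C$, and the ``moreover'' clause via uniqueness plus equicardinality of bases) is correct as written.
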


Let $M = (E, \mathcal I)$ be a matroid.
It is known that the pair $M^* \coloneqq (E, \mathcal I^*)$ with
\begin{align*}
    \mathcal{I}^* = \{X \subseteq E: X \subseteq E \setminus B \text{ for some } B \in \mathcal B(M) \}
\end{align*}
forms a matroid~\cite{Oxley92}, which is called the \emph{dual matroid} of $M$.
Note that $(M^*)^* = M$ for any matroid $M$.
For $X \subseteq E$, we define pairs
\begin{align*}
    M \mid X &= (X, \{I \subseteq X : I \in \mathcal I\}),\\
    M \contract X &= (E \setminus X, \{I \subseteq E \setminus X : \exists B \in \mathcal B(M \mid X) \text{ s.t. } B \cup I \in \mathcal I\}).
\end{align*}

These pairs are called the \emph{restriction} and the \emph{contraction} of $M$ with respect to $X$, respectively.
It is known that these pairs are also matroids~\cite{Oxley92} for any $X \subseteq E$.

In this paper, a matroid $M$ is given as an \emph{independence oracle}, that is, we can query the oracle to decide whether a subset $X \subseteq E$ is independent in $M$.

\begin{proposition}[Proposition 3.1.10 in \cite{Oxley92}]\label{prop:circuits-contraction}
    Let $X \subseteq E$.
    For every circuit $C$ of $M$ with $C \not\subseteq X$, there is a circuit $C^*$ of $M \contract X$ such that $C^* \subseteq C \setminus X$.

\end{proposition}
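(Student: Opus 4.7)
The plan is to prove that $C \setminus X$ is dependent in $M \contract X$; once this is established, any minimal dependent subset of $C \setminus X$ within $M \contract X$ supplies the required circuit $C^*$.

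The first step is to exhibit a basis of $M \mid X$ with respect to which this dependence is direct. Since $C \not\subseteq X$, the set $C \cap X$ is a proper subset of the circuit $C$, and is therefore independent in $M$. Applying the augmentation axiom inside the matroid $M \mid X$, I would extend $C \cap X$ to a basis $B_X$ of $M \mid X$. Because $B_X \cup (C \setminus X) \supseteq (C \cap X) \cup (C \setminus X) = C$, the set $B_X \cup (C \setminus X)$ is dependent in $M$.

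The second step is to rule out the possibility that some \emph{other} basis of $M \mid X$ certifies independence of $C \setminus X$ in $M \contract X$, since the paper's definition of independence in the contraction only requires some basis to witness it. Here I would invoke the standard fact that this definition is basis-independent: any basis $B$ of $M \mid X$ satisfies $|B| = r_M(X)$ and spans $X$ in $M$, so $r_M(B \cup (C \setminus X)) = r_M(X \cup (C \setminus X))$, and hence $B \cup (C \setminus X)$ is independent in $M$ if and only if $r_M(X \cup (C \setminus X)) = r_M(X) + |C \setminus X|$, an equation that does not mention $B$. Consequently, the dependence of $B_X \cup (C \setminus X)$ in $M$ propagates to every basis $B$ of $M \mid X$, so $C \setminus X$ is dependent in $M \contract X$, as required.

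The only genuinely matroid-theoretic step is the basis-independence argument in the second step; the remainder is just unfolding the definitions of circuit, restriction, and contraction given in the excerpt.
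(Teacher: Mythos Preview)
The paper does not supply its own proof of this proposition; it is quoted verbatim as Proposition~3.1.10 from Oxley's textbook and used as a black box in the proof of \cref{lem:correctness-exclude-matroid}. Your argument is correct and is essentially the standard textbook proof: since $C \cap X$ is a proper subset of the circuit $C$, it is independent and extends to a basis $B_X$ of $M \mid X$, whence $B_X \cup (C \setminus X) \supseteq C$ is dependent in $M$; the basis-independence step (via the rank identity $r_M(B \cup (C \setminus X)) = r_M(X \cup (C \setminus X))$ for any basis $B$ of $M \mid X$) then forces $C \setminus X$ to be dependent in $M \contract X$, and a minimal dependent subset furnishes the desired circuit $C^*$. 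There is nothing to compare against in the paper itself.
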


Let $G = (V, E)$ be a multigraph.
A subset of edges is said to be \emph{acyclic} if it induces a forest in $G$.
Let $\mathcal F \subseteq 2^E$ be the collection of all acyclic edge subsets of $G$.
Then, it is known that the pair $M = (E, \mathcal F)$ forms a matroid, which is called a \emph{graphic matroid}~\cite{Oxley92}.

\section{Forcing a unique minimum weight matroid basis}\label{sec:unique_mst}

In this section, we give polynomial-time algorithms for \shortincmst{} and \shortexcmst{}.
The basic strategy of both algorithms follows Kruskal's algorithm~\cite{Kruskal56} for computing a minimum weight spanning tree.
We first briefly explain our algorithms for \shortincmst{} and \shortexcmst{} in \cref{subsec:example-mst} and then generalize them to the setting of matroids in \cref{subsec:matroid}.

\subsection{Special case: Minimum weight spanning trees}\label{subsec:example-mst}

\begin{algorithm}[tb]
    \caption{An algorithm for \shortexcmst.}
    \label{alg:anti-spanning-tree}
    \KwIn{A multigraph $G$ with weight function $w: E(G) \to \mathbb R$.}
    \KwOut{A minimum anti-forcing set for minimum weight spanning trees of $(G,w)$.}
    \Begin{
        $S \leftarrow \varnothing$; $T \leftarrow \varnothing$\;
        \While {$E(G) \neq \varnothing$}{
            $w_{\min} \leftarrow \min_{e \in E(G)} w(e)$\;
            $E_{\min} \leftarrow \{e \in E(G) : w(e) = w_{\min}\}$\;
            Let $G_{\min}$ be the subgraph $(V(G), E_{\min})$ of $G$\;
            Let $F_{\min}$ be a maximal forest of $G_{\min}$\;
            $T \leftarrow T \cup F_{\min}$\;
            $S \leftarrow S \cup (E_{\min} \setminus (F_{\min} \cup \{e \in E_{\min} : e \text{ is a self-loop in } G_{\min}\}))$\;
            Let $G$ be the graph obtained by contracting all edges in $E_{\min}$\;
        }
        \Return $S$\;
    }
\end{algorithm}

\begin{algorithm}
    \caption{An algorithm for \shortincmst.}
    \label{alg:spanning-tree}
    \KwIn{A multigraph $G$ with weight function $w\colon E(G) \to \mathbb R$.}
    \KwOut{A minimum forcing set for minimum weight spanning trees of $(G,w)$.}
    \Begin{
        $S \leftarrow \varnothing$; $T \leftarrow \varnothing$\;
        \While {$E(G) \neq \varnothing$}{
            $w_{\min} \leftarrow \min_{e \in E(G)} w(e)$\;
            $E_{\min} \leftarrow \{e \in E(G) : w(e) = w_{\min}\}$\;
            Let $G_{\min}$ be the subgraph $(V, E_{\min})$ of $G$\;
            Let $F_{\min}$ be a maximal forest of $G_{\min}$\;
            $T \leftarrow T \cup F_{\min}$\;
            $S \leftarrow S \cup \{e \in F_{\min} : e \text{ is not a bridge of $G_{\min}$}\}$\;
            Let $G$ be the graph obtained by contracting all edges in $E_{\min}$\;
        }
        \Return $S$\;
    }
\end{algorithm}

First, we sketch an intuition behind our algorithm for {\shortexcmst}. 
A formal discussion, including the correctness of the algorithm, is deferred to \cref{subsec:matroid}.
Let $G$ be a connected edge-weighted multigraph.
Contracting an edge of $G$ may create parallel edges or self-loops, and we do not remove them in the contraction operation.
Let us recall Kruskal's algorithm for computing a minimum weight spanning tree of $G$:
Starting from $T = \varnothing$, we repeatedly choose a minimum weight edge $e$ of $G$ that does not form a cycle with the previously chosen edges $T$ and add it to $T$ until it becomes a spanning tree of $G$.
Instead of adding edges one by one, we can modify this procedure to add, in bulk, a maximal forest $F$ consisting only of minimum weight edges $E_{\min}$ that have not been chosen yet.
We repeatedly apply this to the graph obtained from $G$ by contracting all edges in $E_{\min}$ as long as $E(G)$ is nonempty.
It is easy to observe that this procedure also computes a minimum weight spanning tree $T$ of $G$.
Since the choice of a maximal forest $F$ consisting of edges in $E_{\min}$ is not unique, the solution $T$ obtained by this procedure is not unique as well.
Thus, to force the solution to be unique, it is necessary to include all edges in $E_{\min} \setminus F$, except for self-loops, as an anti-forcing set.
The algorithm that formalizes this intuition is given in \Cref{alg:anti-spanning-tree}.
For \shortincmst{}, we can design a similar algorithm: Instead of including all non-loop edges in $E_{\min}\setminus F$, include all edges in $F$ that are not bridges in $G_{\min}$.
The pseudocode of this algorithm is given in~\Cref{alg:spanning-tree}.
It can be easily confirmed that both algorithms for \shortexcmst{} and \shortincmst{} run in time $\bigoh(m\log n)$ using a standard analysis of Kruskal's algorithm and a linear-time algorithm for enumerating bridges~\cite{Tarjan74}.
The proof of their correctness is deferred to the next subsection.

\begin{theorem}\label{thm:mst:main}
    \shortincmst{} and \shortexcmst{} can be solved in time $\bigoh(m\log n)$, where $n = |V(G)|$ and $m = |E(G)|$.
\end{theorem}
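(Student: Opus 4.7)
The plan is to split the proof into two halves: correctness and runtime.  For correctness, I would not argue directly on \cref{alg:anti-spannning-tree} and \cref{alg:spanning-tree}; instead, I would view both as the specialization to the graphic matroid $M(G)$ of the generic matroid algorithm developed in \cref{subsec:matroid}, and appeal to its correctness stated there.  The only bridging remarks needed are that (i) loops in $M(G)$ are exactly the self-loops in the evolving multigraph, which is why \cref{alg:anti-spannning-tree} explicitly excludes them from $S$, and (ii) the bridges of $G_{\min}$ are exactly the coloops of the restriction of $M(G)$ to $E_{\min}$, which is why \cref{alg:spanning-tree} singles them out.  After this translation, correctness of both algorithms is inherited from the matroid result; minimum weight spanning trees of $(G,w)$ are the maximum weight bases of $M(G)$ under the weight function $-w$, and the forcing/anti-forcing relationship between $M$ and $M^*$ mentioned in the introduction covers both cases.

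For the runtime bound, I would proceed in Kruskal style.  As preprocessing, sort the edges of $G$ in non-decreasing order of weight, costing $\bigoh(m\log n)$ since $\log m = \bigoh(\log n)$ once we collapse multiple parallel edges of identical endpoints into a representative list.  Maintain a Union-Find structure whose classes represent super-nodes arising from earlier contractions; contractions are then implemented implicitly by \textsc{Union} operations, so $G_{\min}$ never needs to be rebuilt from scratch.  In each outer iteration, the batch $E_{\min}$ is the next prefix of the sorted list.  A maximal forest $F_{\min}$ of $G_{\min}$ is extracted by scanning $E_{\min}$ with Union-Find in $\bigoh(|E_{\min}|\,\alpha(n))$ time; the bridges of $G_{\min}$ are computed in $\bigoh(|V(G_{\min})|+|E_{\min}|)$ time by Tarjan's algorithm~\cite{Tarjan74}; self-loops are detectable in the same pass by comparing the Union-Find roots of the endpoints of each edge.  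Summed over all iterations, each edge is inspected a constant number of times, yielding total post-sort work $\bigoh(m\,\alpha(n))$, and the overall runtime is $\bigoh(m\log n)$, dominated by the initial sort.

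The main obstacle I anticipate is the accounting for multigraphs, since the algorithm explicitly keeps parallel edges and self-loops alive after contractions.  I expect this to be cosmetic rather than essential: self-loops contribute nothing to either a maximal forest or to bridge detection and can be filtered in the same linear scan that identifies $E_{\min}$, while parallel edges only inflate the per-iteration charge in proportion to their number, so they are absorbed into the $\bigoh(m)$ term.  The other small subtlety is making sure that, in the bridge algorithm on $G_{\min}$, an edge parallel to another edge of the same weight is correctly classified as a non-bridge; this is automatic from Tarjan's algorithm but worth a one-line remark.  With these observations in place, the theorem follows.
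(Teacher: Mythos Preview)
Your proposal is correct and follows essentially the same approach as the paper: the paper also states the $\bigoh(m\log n)$ bound as a direct consequence of ``a standard analysis of Kruskal's algorithm and a linear-time algorithm for enumerating bridges~\cite{Tarjan74}'' and defers correctness of \cref{alg:anti-spannning-tree,alg:spanning-tree} to the matroid framework of \cref{subsec:matroid}. Your write-up is somewhat more explicit than the paper's about the Union--Find implementation of contractions and about why bridges/self-loops in $G_{\min}$ correspond to coloops/loops in the restricted matroid, but the argument is the same in substance.
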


\subsection{General case: Matroid bases}\label{subsec:matroid}
    In this subsection, we consider generalized versions of \shortincmst{} and \shortexcmst{}, where the goals are to find minimum forcing and anti-forcing sets for minimum weight bases of a matroid $M$.
    Clearly, \shortincmst{} and \shortexcmst{} are special cases, where $M$ is a graphic matroid.
\begin{theorem}\label{thm:polytime-matroid}
    Let $M = (E, \mathcal I)$ be a matroid with weight function $w \colon E \to \mathbb R$.
    Assume that $M$ is given as an independence oracle.
    Then, a minimum forcing set for minimum weight bases of $(M, w)$ can be computed in polynomial time.
    Similarly, a minimum anti-forcing set for minimum weight bases of $(M, w)$ can be computed in polynomial time as well.
\end{theorem}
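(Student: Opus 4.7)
The plan is to handle both problems uniformly by reducing the forcing variant to the anti-forcing variant via matroid duality, and then generalising \cref{alg:anti-spannning-tree} to matroids. For the reduction I would use that $T \in \mathcal B(M)$ iff $E \setminus T \in \mathcal B(M^*)$ with $w(T) = w(E) - w(E \setminus T)$, so $T$ is the unique minimum-weight basis of $(M, w)$ \emph{containing} a set $S \subseteq T$ iff $E \setminus T$ is the unique minimum-weight basis of $(M^*, -w)$ \emph{disjoint from} $S$. Since an independence oracle for $M^*$ is simulable from one for $M$ with polynomial overhead, it suffices to give a polynomial-time algorithm for the anti-forcing variant on an arbitrary matroid specified by an independence oracle.

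The algorithm is the direct matroid analogue of \cref{alg:anti-spannning-tree}. Starting with $T = S = \varnothing$ and $M' = M$, repeat while $E(M') \neq \varnothing$: let $w_{\min} = \min_{e \in E(M')} w(e)$, $E_{\min} = \{e \in E(M') : w(e) = w_{\min}\}$; compute a maximal independent subset $F_{\min} \subseteq E_{\min}$ of $M'$ greedily via oracle calls; add $F_{\min}$ to $T$ and every non-loop element of $E_{\min} \setminus F_{\min}$ to $S$; then replace $M'$ by $(M' \contract F_{\min}) \mid (E(M') \setminus E_{\min})$. By the standard matroid-greedy argument, $T$ is a minimum-weight basis of $(M, w)$ at termination, and each iteration issues only polynomially many oracle queries.

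For correctness I would characterise the family $N(T)$ of elements that must lie in every anti-forcing set for $T$: an element $e$ lies in $N(T)$ precisely when some minimum-weight basis $T'$ satisfies $T' \setminus T = \{e\}$, and by \cref{prop:circuit-basis} this occurs exactly when $e \notin T$, $e$ is not a loop of $M$, and the unique circuit $C \subseteq T \cup \{e\}$ contains an element $e' \in T$ with $w(e') = w(e)$. The strong basis exchange axiom then shows that for \emph{any} minimum-weight basis $T' \neq T$ and any $e \in T' \setminus T$, there is a matching $e' \in T \setminus T'$ in the relevant circuit with $w(e) = w(e')$ (equality being forced because both swap directions yield bases of weight $\geq w(T) = w(T')$), so $T' \cap N(T) \neq \varnothing$. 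Hence $N(T)$ is an anti-forcing set for $T$, and being necessary is tautological, so $N(T)$ is the unique minimum anti-forcing set for $T$.

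It remains to show that the algorithm's $S$ equals $N(T)$ for the basis $T$ it outputs. If $e$ is added to $S$ at stage $w_{\min}$, let $X$ denote the union of the $F_{\min}$'s contracted before this stage; then $e$ is a non-loop of $M'$ and $F_{\min} \cup \{e\}$ is dependent in $M'$, so $X \cup F_{\min} \cup \{e\}$ is dependent in $M$ and contains a circuit $C$, which must equal the unique circuit in $T \cup \{e\}$. Since $e$ is non-loop in $M'$, $C \not\subseteq X \cup \{e\}$, forcing $C$ to meet $F_{\min}$ in some $e'$ with $w(e') = w_{\min} = w(e)$ and witnessing $e \in N(T)$. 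Conversely, if $e \in N(T)$ then the companion $e'$ in the circuit of $T \cup \{e\}$ has $w(e') = w(e)$, so $e$ cannot have become a loop by the time the algorithm processes weight $w(e)$ (a loop there would force the whole circuit to lie in $X \cup \{e\}$, where all elements of $X$ have strictly smaller weight); together with $e \notin T$ this forces $e \in E_{\min} \setminus F_{\min}$ in that iteration and hence into $S$. The main obstacle is this circuit-lifting bookkeeping between $M$ and its iterated contractions, which I would handle via \cref{prop:circuits-contraction}; a useful auxiliary remark is that $|S|$ is independent of the particular choices of $F_{\min}$ (since $|F_{\min}|$ is the rank of $E_{\min}$ in $M'$ and the loops of $E_{\min}$ in $M'$ are intrinsic), so the greedy tie-breaking is never lossy.
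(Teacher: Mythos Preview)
Your proposal is correct and follows essentially the same route as the paper: reduce forcing to anti-forcing via matroid duality, then run the Kruskal-style greedy over weight classes, collecting into $S$ the non-loop elements of each $E_{\min}$ that fall outside the chosen basis $F_{\min}$. Your correctness argument is organised a bit differently---you explicitly characterise the necessary set $N(T)$ for the produced basis $T$ and show the algorithm outputs exactly $N(T)$, whereas the paper argues minimality by direct contradiction after first proving (via symmetric basis exchange) that $|B \cap E_i|$ is constant across all minimum-weight bases $B$.

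There is one step you take for granted. Your auxiliary remark that $|S|$ is independent of the choices of $F_{\min}$ only compares outputs of different runs of the algorithm; it shows $|N(T_1)| = |N(T_2)|$ for all \emph{algorithm-producible} bases $T_1, T_2$. To conclude that $S$ is a minimum anti-forcing set \emph{globally}, you still need that every $T' \in \mathcal B_{\min}(M)$ arises from some run---equivalently, that $T' \cap E_{\le i}$ is always a basis of $M \mid E_{\le i}$, or that $|T' \cap E_i|$ is constant over $\mathcal B_{\min}(M)$. This is exactly the Claim inside the paper's proof of \cref{lem:correctness-exclude-matroid}, and it is the natural missing link in your sketch; once stated, your argument goes through cleanly.
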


Let $\mathcal B_{\min}(M)$ and $\mathcal B_{\max}(M)$ be the collections of minimum and maximum weight bases of $M$, respectively.
At first, we show that the problem of finding a minimum forcing set for $\mathcal B_{\min}(M)$ is equivalent to that of finding a minimum anti-forcing set for $\mathcal B_{\max}(M^*)$, where $M^*$ is the dual matroid of $M$.

\begin{observation}\label{obs:matroid-dual}
    A set $S \subseteq E$ is a forcing set for $\mathcal B_{\min}(M)$ if and only if 
    $S$ is an anti-forcing set for $\mathcal B_{\max}(M^*)$.
\end{observation}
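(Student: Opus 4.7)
The plan is to reduce the claim to a definition-chase based on the standard complementation bijection between $\mathcal B(M)$ and $\mathcal B(M^*)$. Recall from the dual matroid definition recalled in \cref{sec:preliminaries} that $B \in \mathcal B(M)$ if and only if $E \setminus B \in \mathcal B(M^*)$, so complementation is a bijection $\mathcal B(M) \to \mathcal B(M^*)$. Since $w(E \setminus B) = w(E) - w(B)$ and $w(E)$ does not depend on the choice of basis, minimizing $w$ over $\mathcal B(M)$ corresponds exactly to maximizing $w$ over $\mathcal B(M^*)$ along this bijection. Hence complementation restricts to a bijection $\mathcal B_{\min}(M) \to \mathcal B_{\max}(M^*)$, i.e., $\mathcal B_{\max}(M^*) = \{E \setminus B : B \in \mathcal B_{\min}(M)\}$.

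Next, I would translate the forcing/anti-forcing conditions through this bijection. For any $S \subseteq E$ and any $B \subseteq E$, the trivial equivalence
\[
S \subseteq B \iff S \cap (E \setminus B) = \varnothing
\]
holds. Applying this to each $B \in \mathcal B_{\min}(M)$ and writing $B^* = E \setminus B$ shows that ``$S$ is contained in a unique $B \in \mathcal B_{\min}(M)$'' is logically identical to ``$S$ is disjoint from a unique $B^* \in \mathcal B_{\max}(M^*)$''. Unpacking the definitions of forcing set and anti-forcing set from \cref{sec:preliminaries}, the former says that $S$ is a forcing set for $\mathcal B_{\min}(M)$, while the latter says that $S$ is an anti-forcing set for $\mathcal B_{\max}(M^*)$. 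This yields both directions simultaneously.

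There is no real obstacle here; the whole statement is essentially an unfolding of definitions once the bijection $B \leftrightarrow E \setminus B$ is in hand. The only points that require a brief word of care are (i) that the weight function on $M^*$ is taken to be the same $w$ on the common ground set $E$, so that the constant shift $w(E)$ is what turns minimization into maximization, and (ii) that the bijection is between \emph{all} bases of $M$ and $M^*$, so neither loops nor coloops create any pathological behavior when restricting to the extremal bases.
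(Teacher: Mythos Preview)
Your proof is correct and follows essentially the same approach as the paper: both use the complementation bijection $B \leftrightarrow E\setminus B$ between $\mathcal B(M)$ and $\mathcal B(M^*)$, observe that it carries $\mathcal B_{\min}(M)$ to $\mathcal B_{\max}(M^*)$, and reduce the forcing/anti-forcing equivalence to the trivial set identity $S \subseteq B \iff S \cap (E\setminus B)=\varnothing$. Your version is in fact slightly more explicit, since you justify the min/max correspondence via $w(E\setminus B)=w(E)-w(B)$, whereas the paper simply asserts it.
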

\begin{proof}
    Let $B \subseteq E$ be a minimum weight basis of $(M, w)$ and $B^* = E \setminus B$.
    Note that $B^*$ is a basis of the dual matroid $M^*$.
    Let $S\subseteq E$.
    Now, $S \subseteq B$ if and only if $S \setminus B = \varnothing$.
    Combined with $ S \setminus (E\setminus B^*) = S \cap B^*$, we conclude that $S \subseteq B$ if and only if $S \cap B^* = \varnothing$.
    Since $B$ is a minimum weight basis of $(M, w)$ if and only if $B^*$ is a maximum weight basis of $(M^*, w)$, the claim holds.
\end{proof}

\cref{obs:matroid-dual} suggests the relations in \cref{fig:matroid-forcing-set-equivalence}, enabling us to find a minimum forcing set for $\mathcal B_{\min}(M)$ by applying an algorithm for finding a minimum anti-forcing set for $\mathcal B_{\max}(M^*)$.
Note that a basis $B$ minimizes the value $w(B)$ if and only if it maximizes $-w(B)$. 
\begin{figure}[htb]
    \centering
    \begin{tikzpicture}[every node/.style={outer sep=0.2em},]
        \node[draw,align=center, ] (ifsmin) {$S$ is a \textbf{forcing set} for\\\textbf{min.} weight bases of $(M, w)$};
        \node[draw,align=center, below=of ifsmin]    (ifsmax)  {$S$ is a \textbf{forcing set} for\\\textbf{max.} weight bases of $(M, -w)$};
        \node[draw,align=center, right=2.5 of ifsmax] (efsmax)  {$S$ is an \textbf{anti}-forcing set for\\ \textbf{max.} weight bases of $(M^*, w)$};
        \node[draw,align=center, above= of efsmax] (efsmin)  {$S$ is an \textbf{anti}-forcing set for\\ \textbf{min.} weight bases of $(M^*, -w)$};
        \draw[implies-implies, double equal sign distance] (ifsmin) -- (efsmax) node[midway, above=0.1] {$M \leftrightarrow M^*$} node[midway, below=0.12] {\cref{obs:matroid-dual}};
        
        \draw[implies-implies, double equal sign distance] (ifsmax) -- (efsmin);
        \draw[implies-implies, double equal sign distance] (ifsmin) -- (ifsmax) node[midway, left ] {$w \leftrightarrow -w$};
        \draw[implies-implies, double equal sign distance] (efsmin) -- (efsmax) node[midway, left ] {$w \leftrightarrow -w$};
    \end{tikzpicture}
    \caption{The equivalence of the problems.}
    \label{fig:matroid-forcing-set-equivalence}
\end{figure}

Now, we are ready to describe our algorithm for computing a minimum anti-forcing set for $\mathcal B_{\min}(M)$, which is shown in \Cref{alg:exclude-matroid}.
The underlying idea of the algorithm is analogous to \Cref{alg:anti-spanning-tree}.
At each iteration of the main loop, we consider the matroid $M_{\min}$ consisting only of the minimum weight elements $E_{\min}$, and select a basis $B_{\min}$ from $M_{\min}$.
The crux of its correctness is that it is necessary and sufficient to include all the elements in $E_{\min} \setminus B_{\min}$ except for loops in any anti-forcing set for $\mathcal B_{\min}(M)$.

\begin{algorithm}[tb]
    \caption{An algorithm for a minimum anti-forcing set for $\mathcal B_{\min}(M)$}
    \label{alg:exclude-matroid}
    \KwIn{Matroid $M = (E, \mathcal I)$ with weight function $w\colon E \to \mathbb R$.}
    \KwOut{A minimum anti-forcing set for $\mathcal B_{\min}(M)$.}
    \Begin{
        $S \leftarrow \varnothing$; $B \leftarrow \varnothing$\;
        \While {$E \neq \varnothing$}{
            
            $w_{\min} \leftarrow \min_{e \in E} w(e)$\;
            $E_{\min} \leftarrow \{e \in E : w(e) = w_{\min}\}$\;
            $M_{\min} \leftarrow M \mid E_{\min}$\;
            Let $B_{\min}$ be a basis of $M_{\min}$\;
            $B \leftarrow B \cup B_{\min}$\;
            $S \leftarrow S \cup (E_{\min} \setminus (B_{\min} \cup \{e \in E_{\min} : e \text{ is a loop in } M_{\min}\}))$\;
            $M \leftarrow M\contract E_{\min}$\;
        }
        \Return $S$\;
    }
\end{algorithm}

\begin{lemma}\label{lem:correctness-exclude-matroid}
\Cref{alg:exclude-matroid} 
returns a minimum anti-forcing set for $\mathcal B_{\min}(M)$.
\end{lemma}
\begin{proof}
    First, we show that the output $S$ of \Cref{alg:exclude-matroid} is an anti-forcing set for $\mathcal B_{\min}(M)$.
    From the correctness of the greedy algorithm for computing a minimum weight basis of a matroid,
    at the end of the main loop, we have $B \in \mathcal B_{\min}(M)$.
    Moreover, we have $B \cap S = \varnothing$.

    Let $B' \in \mathcal B_{\min}(M)$ with $B' \neq B$.
    It suffices to show that $B' \cap S \neq \varnothing$.
    Now, there exists an element $e \in B' \setminus B$ since $B$ and $B'$ are distinct bases of $M$.
    We can assume that, at some iteration of the main loop, $e\in E_{\min}$ and $e \notin B_{\min}$ hold.
    Since $e$ is not a loop in $M_{\min}$ and $e \in E_{\min} \setminus B_{\min}$, we have $e \in S$.
    Hence, $B' \cap S \neq \varnothing$. This concludes that $S$ is an anti-forcing set for $\mathcal B_{\min}(M)$.

    Next, we show that $S$ is a minimum anti-forcing set for $\mathcal B_{\min}(M)$.
    Let $S'$ be a minimum anti-forcing set for $\mathcal B_{\min}(M)$ and $B' \in \mathcal B_{\min}(M)$ be the unique minimum weight basis of $M$ with $B' \cap S' = \varnothing$.
    Let $w_i$ be the value of $w_{\min}$ in the $i$-th iteration of the while-loop and let $E_i = \{e \in E : w(e) = w_i\}$. 
    
    For each $i$, let $E_{\le i} = E_1 \cup \dots \cup E_i$ and let $M_i = M \contract E_{\le i-1}$.
    Note that $M_i \mid E_i = M_{\min}$ in the $i$-th iteration.
    
    \begin{claim}
        $|B \cap E_i| = |B' \cap E_i|$ holds for all $i$.
    \end{claim}
    \begin{claimproof}
        We prove a slightly stronger statement: for $B_1, B_2 \in \mathcal B_{\min}(M)$, it holds that $|B_1 \cap E_i| = |B_2 \cap E_i|$ for all $i$.
        Suppose otherwise.
        We choose $B_1, B_2 \in \mathcal B_{\min}(M)$ with $|B_1 \cap E_j| \neq |B_2 \cap E_j|$ for some $j$ in such a way that the symmetric difference $|B_1 \symdif B_2| \coloneqq |B_1 \setminus B_2| + |B_2 \setminus B_1|$ is minimized.
        We assume without loss of generality that $|B_1 \cap E_j| > |B_2 \cap E_j|$.
        Since $B_1 \neq B_2$, by the symmetric basis-exchange property, for $e_1 \in (B_1 \setminus B_2) \cap E_j$, there exists $e_2 \in B_2 \setminus B_1$ such that both $B_1' \coloneqq (B_1 \cup \{e_2\}) \setminus \{e_1\}$ and $B'_2 \coloneqq (B_2 \cup \{e_1\}) \setminus \{e_2\}$ are bases in $M$.
        Since $B_1, B_2 \in \mathcal B_{\min}(M)$, we have $w(e_1) = w(e_2)$, and hence $e_1, e_2 \in E_j$.
        Moreover, $B'_1, B'_2 \in \mathcal B_{\min}(M)$.
        This contradicts the choice of $B_1, B_2$ as 
        \begin{align*}
            |B'_1 \symdif B_2| = |(B_1 \cup \{e_2\}) \setminus \{e_1\}) \symdif B_2| = |B_1 \symdif B_2| - 2 < |B_1 \symdif B_2|
        \end{align*}
        and $|B'_1 \cap E_j| = |B_1 \cap E_j| \neq |B_2 \cap E_j|$.
    \end{claimproof}

    Suppose to the contrary that $|S'| < |S|$.
    This implies that there is an index $j$ such that $|S' \cap E_j| < |S \cap E_j|$.
    By the above claim, we have $|B \cap E_j| = |B' \cap E_j|$.
    Let $e \in E_j \setminus (B' \cup S')$ that is not a loop in $M_j \mid E_{j}$.
    We can choose such an element $e$ since $|E_j| = |B \cap E_j| + |S \cap E_j| + |L_j|$, where $L_j$ is the set of loops in $M_j \mid E_j$, and $|S' \cap E_j| < |S \cap E_j|$.
    Note that $e$ is not a loop of $M_j$.
    Due to \cref{prop:circuit-basis}, $B' \cup \{e\}$ has a unique circuit $C$ of $M$.
    If $C$ contains an element $e' \in E \setminus E_{\le j}$, then $(B' \cup \{e\}) \setminus \{e'\}$ is a basis of $M$ with weight strictly smaller than $B'$, contradicting $B' \in \mathcal B_{\min}(M)$.
    Thus, $C$ consists of only elements in $E_{\le j}$.
    If $C$ contains an element $e' \in E_j$, the basis $(B' \cup \{e\}) \setminus \{e'\}$, which belongs to $\mathcal B_{\min}(M)$, avoids $S'$, contradicting the uniqueness of $B'$.
    Hence, all the elements of $C$ except for $e$ belong to $E_{\le j - 1}$.
    As $C \not\subseteq E_{\le j - 1}$, by~\cref{prop:circuits-contraction}, there is a circuit $C^*$ of $M_j$ such that $C^* \subseteq C \setminus E_{\le j-1}$.
    This circuit is indeed a singleton $C^* = \{e\}$, contradicting the fact that $e$ is not a loop in $M_j$.
\end{proof}

Finally, we consider the running time of \Cref{alg:exclude-matroid}.
Given a matroid $M = (E, \mathcal I)$ as an independence oracle and $X, Y \subseteq E$, we can decide whether $Y$ is independent in $M \mid X$, in $M \contract X$, and in $M^*$ with a polynomial number of oracle calls to $M$.
Thus, each step of \Cref{alg:exclude-matroid} can be performed in polynomial time.
Thus, \cref{thm:polytime-matroid} holds.

Similarly to \cref{thm:inclusion-forcing-set-stpath}, we can compute smallest forcing and anti-forcing sets for a given $B^* \in \mathcal B_{\min}(M)$ in polynomial time, by just taking $B_{\min}$ at line~7 in \Cref{alg:exclude-matroid} as $B_{\min} = B^* \cap E_{\min}$.

\begin{corollary}
    Let $M = (E, \mathcal I)$ be a matroid with weight function $w: E \to \mathbb R$.
    Assume that $M$ is given as an independence oracle.
    Given a minimum weight basis $B^* \in \mathcal B_{\min}(M)$, a minimum forcing set / a minimum anti-forcing set for $B^*$ can be computed in polynomial time.
\end{corollary}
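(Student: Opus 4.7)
The plan is to adapt \cref{alg:exclude-matroid} so that, given a designated minimum weight basis $B^* \in \mathcal B_{\min}(M)$, at line~7 of each iteration we deterministically set $B_{\min} \coloneqq B^* \cap E_{\min}$ instead of picking an arbitrary basis of $M_{\min}$. Before anything else, I would verify that this choice is legal, namely that $B^* \cap E_{\min}$ really is a basis of $M_{\min} = M \mid E_{\min}$ at every iteration. Independence is immediate since $B^* \cap E_{\min} \subseteq B^* \in \mathcal I$. For maximality, the embedded claim inside the proof of \cref{lem:correctness-exclude-matroid} (which states that $|B_1 \cap E_i| = |B_2 \cap E_i|$ for all $B_1, B_2 \in \mathcal B_{\min}(M)$ and all $i$) gives $|B^* \cap E_{\min}| = |B_{\min}|$, so $B^* \cap E_{\min}$ has the size of a basis of $M_{\min}$.

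Next I would argue correctness for the anti-forcing case. With this choice, the variable $B$ maintained by the algorithm satisfies $B = B^*$ at termination, because $B$ is the disjoint union over iterations of $B^* \cap E_{\min}$. The proof from \cref{lem:correctness-exclude-matroid} that the output $S$ is an anti-forcing set for $B$ goes through unchanged: every other minimum weight basis $B'$ differs from $B = B^*$ in some element $e \in B' \setminus B^*$, and the iteration at which $e$ was processed puts $e$ into $S$ (since $e \in E_{\min} \setminus B_{\min}$ and $e$ is not a loop of $M_{\min}$, otherwise $e$ would not lie in the independent set $B'$). Hence $S$ is an anti-forcing set specifically for $B^*$.

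For minimality, I would mirror the minimality argument of \cref{lem:correctness-exclude-matroid} verbatim. Suppose an anti-forcing set $S'$ for $B^*$ satisfies $|S'| < |S|$; pick $j$ with $|S' \cap E_j| < |S \cap E_j|$. Since $|B^* \cap E_j|$ is the rank of $M_j \mid E_j$ and the set of loops in this matroid is disjoint from $B^*$ and from any anti-forcing set for $B^*$ (as anti-forcing sets for $B^*$ must avoid $B^*$ and loops cannot block anything), a simple cardinality count produces a non-loop element $e \in E_j \setminus (B^* \cup S')$. The circuit argument using \cref{prop:circuit-basis} and \cref{prop:circuits-contraction} then yields a distinct basis $(B^* \cup \{e\}) \setminus \{e'\} \in \mathcal B_{\min}(M)$ that also avoids $S'$, contradicting the assumption that $B^*$ is the unique minimum weight basis disjoint from $S'$.

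Finally, the forcing set case reduces to the anti-forcing set case via the equivalences summarized in \cref{fig:matroid-forcing-set-equivalence}: by \cref{obs:matroid-dual}, a minimum forcing set for $B^*$ in $(M, w)$ is exactly a minimum anti-forcing set for $E \setminus B^* \in \mathcal B_{\max}(M^*) = \mathcal B_{\min}(M^*, -w)$ in the dual matroid with negated weights. The dual matroid's independence oracle can be simulated with a polynomial number of calls to the original oracle, so the same algorithm applied to $(M^*, -w)$ with designated basis $E \setminus B^*$ produces the desired set in polynomial time. The main technical point throughout is the observation that every minimum weight basis of $M$ meets each weight class $E_i$ in the same number of elements, which both licenses the substitution $B_{\min} = B^* \cap E_{\min}$ and powers the counting step in the minimality argument; no genuinely new ideas beyond those in \cref{lem:correctness-exclude-matroid} are required.
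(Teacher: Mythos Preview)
Your proposal is correct and follows exactly the paper's approach: the paper's entire argument is the single remark that one should take $B_{\min} = B^* \cap E_{\min}$ at line~7 of \cref{alg:exclude-matroid}, and you have fleshed out precisely why this works by reusing the proof of \cref{lem:correctness-exclude-matroid} and the duality of \cref{obs:matroid-dual}. One small technical caveat: when you write ``independence is immediate since $B^* \cap E_{\min} \subseteq B^* \in \mathcal I$'', note that at iteration $i$ the matroid $M_{\min}$ is $(M \contract E_{\le i-1}) \mid E_i$, not $M \mid E_i$, so independence in the original $M$ does not literally suffice; the fix is that $B^* \cap E_{\le i-1}$ is a basis of $M \mid E_{\le i-1}$ (by the size claim you already invoke), which then witnesses independence of $B^* \cap E_i$ in the contraction.
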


\section{Forcing a unique shortest path}\label{sec:unique_stpath}
In this section, we discuss {\shortincsssp} and {\shortexcsssp}.
Recall that in these problems we are given a directed graph $G$ with $s, t \in V(G)$ and an edge-weight function $w\colon E \to \mathbb R_{+}$.
The numbers of vertices and edges in $G$ are denoted by $n$ and $m$, respectively.
A forcing (resp.~anti-forcing) set $S \subseteq E(G)$ (for shortest $s$-$t$ paths) is \emph{minimal} if any proper subset of $S$ is not a forcing (resp.~anti-forcing) set.
It is easy to observe that for any minimal (anti-)forcing set $S$, every edge in $S$ is contained in a shortest $s$-$t$ path in $G$.
This would reduce our problems to the cases where $G$ is acyclic.
More specifically, we let $d_s : V(G) \to \mathbb R_{+} \cup \{0, \infty\}$ be the distance labeling from $s$ in $G$, that is, $d_s(v)$ is the (shortest) distance from $s$ to $v$ in $(G, w)$ for $v \in V(G)$.
We remove all the edges $e = (u, v)$ that do not satisfy $d_s(v) = d_s(u) + w(e)$, which are edges that do not belong to any shortest path from $s$ to $v$.
We also remove all vertices (and their incident edges) that are not reachable from $s$ or not reachable to $t$.
As $w(e) > 0$ for each $e \in E$, the graph $G'$ obtained in this way is indeed acyclic, which can be computed by a standard shortest path algorithm in $\bigoh(m + n\log n)$ time.
The following observation immediately follows from the fact that each shortest $s$-$t$ path in $(G, w)$ is an $s$-$t$ path in $G'$, and vice versa.

\begin{observation}\label{obs:shortest-path-dag}
Let $S \subseteq E$.
Then, $S$ is a minimal forcing set in $G$ if and only if it is a minimal forcing set in $G'$.
Similarly, $S$ is a minimal anti-forcing set in $G$ if and only if it is a minimal anti-forcing set in $G'$.
\end{observation}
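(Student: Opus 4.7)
The plan is to reduce both equivalences to one structural fact: the set of shortest $s$-$t$ paths in $(G,w)$ coincides with the set of $s$-$t$ paths in $G'$, and since every $s$-$t$ path in $G'$ has the same total weight, this also coincides with the set of shortest $s$-$t$ paths in $(G', w|_{E(G')})$. First I would verify this correspondence. If $P$ is a shortest $s$-$t$ path in $G$, then each edge $e = (u,v)$ of $P$ satisfies $d_s(v) = d_s(u) + w(e)$ (otherwise $P$ could be shortened), so $E(P) \subseteq E(G')$, and its endpoints are reachable from $s$ and to $t$. Conversely, for any $s$-$t$ path $P = (s = v_0, v_1, \dots, v_k = t)$ in $G'$, a telescoping sum gives $w(P) = \sum_{i=1}^{k} (d_s(v_i) - d_s(v_{i-1})) = d_s(t)$, so $P$ is a shortest $s$-$t$ path in $G$.

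Next I would argue that any minimal (anti-)forcing set $S$ in $G$ satisfies $S \subseteq E(G')$. For a forcing set, $S$ must be contained in the edge set of some shortest $s$-$t$ path $P$, which by the previous step lies inside $G'$. For an anti-forcing set, any edge $e \in S \setminus E(G')$ intersects no shortest $s$-$t$ path, so $S \setminus \{e\}$ is still an anti-forcing set, contradicting minimality.

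Once $S \subseteq E(G')$ is established, the defining conditions of a (anti-)forcing set reference only the collection of shortest $s$-$t$ paths and their edge sets, and by the structural correspondence this collection is the same whether viewed in $G$ or in $G'$. Hence $S$ is a (anti-)forcing set in $G$ if and only if it is a (anti-)forcing set in $G'$, and because the underlying property ``(anti-)forcing'' transfers for every subset of $E(G')$, minimality transfers automatically. I do not foresee a genuine obstacle; the only point requiring mild care is the anti-forcing direction, where one must first invoke minimality to discard edges outside $E(G')$ before applying the correspondence.
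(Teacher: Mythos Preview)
Your argument is correct and matches the paper's own justification, which simply states that the observation ``immediately follows from the fact that each shortest $s$-$t$ path in $(G, w)$ is an $s$-$t$ path in $G'$, and vice versa.'' You have spelled out this correspondence and the routine transfer of (anti-)forcing and minimality in more detail than the paper does, but the approach is identical.
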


This observation offers several advantages: we can assume that the input directed graph $G$ is acyclic and all the paths from $s$ to $t$ are shortest paths in $G$, making the subsequent discussions simple.

\subsection{Forcing set}\label{subsec:include-stpath}

In this subsection, we describe an algorithm for computing a minimum forcing set for shortest $s$-$t$ paths.
By~\cref{obs:shortest-path-dag}, we can assume that the given graph $G$ is acyclic.
Moreover, we can ignore the weight of edges, as every path from $s$ to $t$ in $G$ is a shortest $s$-$t$ path in the original graph.

For $v \in V(G)$, let $\mathcal P_{v}$ be the set of all paths from $s$ to $v$ in $G$.

For vertices $u, v \in V(G)$, we write $u \ureach v$ if there is exactly one path from $u$ to $v$ in $G$.
This relation $\ureach$ is reflexive, i.e., $v \ureach v$ holds for any $v\in V(G)$.
We assume that $s$ has out-degree at least~$2$ since otherwise we can contract the (unique) out-going edge from $s$ without affecting the solution.
We define $\OPT[e]$ and $\OPT[v]$ as
\begin{align*}
    \OPT[e] &\coloneqq \min\qty{|S| : S \text{ is a forcing set for } \mathcal P_{v} \text{ with } e \in S}
\end{align*}
for $e = (u, v) \in E(G)$ and 
\begin{align*}
    \OPT[v] \coloneqq \min_{u \in N^{-}(v)} \OPT[(u, v)]
\end{align*}
for $v \in V(G) \setminus \{s\}$, while $\OPT[s] \coloneqq 0$. 
From now on, we show how to compute the values of $\OPT$ and then a minimum forcing set for $\mathcal P_t$ from these values.

The following lemma gives a characterization of a forcing set for a specific path $P \in \mathcal P_v$.
\begin{lemma}\label{lem:sssp:forcing}
    Let $S \subseteq E(G)$ and let $P \in \mathcal P_v$ be a path from $s$ to $v$ in $G$ such that all edges in $S$ are contained in $P$.
    Let $e_1, \dots, e_k$ be the edges in $S$ appearing in this order on $P$.
    For $1 \le i \le k$, we let $e_i = (t_i, s_i)$, and let $s_0 = s$ and $t_{k + 1} = v$.
    Then, $S$ is a forcing set for $P$ if and only if $s_i \ureach t_{i + 1}$ for all $0 \le i \le k$.
\end{lemma}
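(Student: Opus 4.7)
The plan is to exploit the decomposition of $P$ induced by the edges of $S$: writing $P_i$ for the subpath of $P$ from $s_i$ to $t_{i+1}$ (possibly a length-zero path when $s_i = t_{i+1}$), we have $P = P_0 \cdot e_1 \cdot P_1 \cdot e_2 \cdots e_k \cdot P_k$. Both directions of the equivalence will fall out of showing that any $s$-$v$ path containing all of $S$ must fit this template, with some $s_i$-$t_{i+1}$ path chosen as the $i$-th segment.

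For the \emph{if} direction I would fix an arbitrary $P' \in \mathcal P_v$ with $S \subseteq E(P')$ and deduce $P' = P$. Since $G$ may be assumed acyclic by \cref{obs:shortest-path-dag}, any topological order of $V(G)$ pins down the relative order of $e_1, \dots, e_k$ on every path containing them all, so these edges appear on $P'$ in the same order as on $P$. Consequently $P'$ decomposes as $Q_0 \cdot e_1 \cdot Q_1 \cdots e_k \cdot Q_k$, where each $Q_i$ is an $s_i$-$t_{i+1}$ path in $G$; invoking the hypothesis $s_i \ureach t_{i+1}$ forces $Q_i = P_i$ for every $i$, hence $P' = P$.

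For the \emph{only if} direction I would argue the contrapositive. Suppose $s_i \not\ureach t_{i+1}$ for some $i$. Since $P_i$ already witnesses one $s_i$-$t_{i+1}$ path, the negation of $\ureach$ yields a second such path $Q_i \neq P_i$. Replacing $P_i$ by $Q_i$ in the concatenation above produces a walk $P'$ from $s$ to $v$ that still contains every edge of $S$ and differs from $P$ on the $i$-th segment. Because $G$ is acyclic, any walk in $G$ is in fact a simple path (a repeated vertex would yield a closed subwalk of positive length, i.e.\ a directed cycle), so $P' \in \mathcal P_v$; this contradicts $S$ being a forcing set for $P$.

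The main subtlety is the \emph{if} direction: a priori one might worry that the edges of $S$ occur in $P'$ in a different order than in $P$, or that some segment $Q_i$ itself traverses an $e_j$ and breaks the clean decomposition. The DAG assumption supplied by \cref{obs:shortest-path-dag} resolves both issues simultaneously, since a topological ordering fixes the order of the $e_j$'s and simplicity of walks in a DAG prevents an edge from appearing in two different positions along $P'$.
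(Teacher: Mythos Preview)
Your proposal is correct and follows essentially the same approach as the paper: both directions hinge on the segment decomposition $P = P_0 \cdot e_1 \cdots e_k \cdot P_k$ and on the fact that any rival path $P'$ containing $S$ must visit the same intermediate endpoints $s_i, t_{i+1}$ in the same order. You are somewhat more explicit than the paper about invoking the DAG assumption from \cref{obs:shortest-path-dag} (to pin down the order of the $e_j$'s along $P'$ and to ensure the spliced walk in the contrapositive is simple), but the underlying argument is identical.
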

\begin{proof}
    Suppose that $S$ is a forcing set for $P$.
    If there are at least two paths from $s_i$ to $t_{i+1}$ for some $i$, we can conclude that there is at least one path $P' \in \mathcal P_v$ with $P' \neq P$ that contains all the edges in $S$, which contradicts the uniqueness of $P$.

    Conversely, suppose that $s_i \ureach t_{i + 1}$ for all $0 \le i \le k$.
    If $S$ is not a forcing set for $P$, there is another path $P' \in \mathcal P_v$ such that all the edges in $S$ are contained in $P'$.
    As $P \neq P'$, at least one edge of $P$ is not contained in $P'$.
    We can assume that this edge appears on the subpath $P_i$ of $P$ between $s_i$ and $t_{i + 1}$.
    Since $P'$ also passes through both $s_i$ and $t_{i+1}$, the subpath of $P'$ between them is distinct from $P_i$, contradicting $s_i \ureach t_{i + 1}$.
\end{proof}

\begin{corollary}\label{cor:objective-st-path-dp}
    The minimum size of a forcing set for $\mathcal P_t$ is equal to
    \begin{align*}
        \min_{v \in V(G)}\qty{\OPT[v] : v \ureach t}.
    \end{align*}
\end{corollary}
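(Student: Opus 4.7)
The plan is to prove the equality by two inequalities, each of which follows fairly directly from the path characterization in \cref{lem:sssp:forcing}.

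For the upper bound $\min_{v} \qty{\OPT[v] : v \ureach t} \ge $ (minimum forcing-set size for $\mathcal{P}_t$), I would fix any vertex $v$ with $v \ureach t$, let $S$ be a forcing set realizing $\OPT[v]$, and let $P' \in \mathcal{P}_v$ be the unique path for which $S$ is a forcing set. By the definition of $\OPT[v]$, $S$ contains some in-edge $e_k = (u,v)$ of $v$. Let $Q$ be the unique $v$-$t$ path in $G$, and set $P = P' \cdot Q \in \mathcal{P}_t$. Enumerating the edges of $S$ in order along $P$ as $e_1, \dots, e_k$ with $e_i = (t_i, s_i)$, the conditions $s_i \ureach t_{i+1}$ for $i = 0, \dots, k-1$ are inherited from $S$ being a forcing set for $P'$ via \cref{lem:sssp:forcing}, while the remaining condition $s_k = v \ureach t = t_{k+1}$ holds by the choice of $v$. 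Hence \cref{lem:sssp:forcing} (applied to $P$) shows that $S$ is a forcing set for $P$, giving a forcing set for $\mathcal{P}_t$ of size $\OPT[v]$.

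For the reverse inequality, let $S$ be a minimum forcing set for $\mathcal{P}_t$; it is automatically minimal, so every edge of $S$ lies on the unique corresponding path $P \in \mathcal{P}_t$. List these edges in order along $P$ as $e_1, \dots, e_k$ with $e_i = (t_i, s_i)$, under the conventions $s_0 = s$ and $t_{k+1} = t$. If $k = 0$, then $S = \varnothing$ is a forcing set for $P$, and \cref{lem:sssp:forcing} yields $s \ureach t$, so taking $v = s$ gives $|S| = 0 = \OPT[s]$. If $k \ge 1$, set $v = s_k$; \cref{lem:sssp:forcing} yields $v \ureach t$. Truncating $P$ at $v$ gives a path $P' \in \mathcal{P}_v$ containing all edges of $S$, and the same enumeration satisfies the hypotheses of \cref{lem:sssp:forcing} with target $v$ in place of $t$ (the only new condition is $s_k \ureach v$, which is trivial by reflexivity). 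Therefore $S$ is a forcing set for $P'$, and since $e_k \in S$ is an in-edge of $v$, we obtain $|S| \ge \OPT[e_k] \ge \OPT[v]$.

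The main subtlety is the bookkeeping between paths in $\mathcal{P}_v$ and their extensions (or restrictions) to paths in $\mathcal{P}_t$ when applying \cref{lem:sssp:forcing} in both directions; once the indexing of the $s_i, t_i$ is aligned, both inequalities are immediate. The only edge cases — $k = 0$ and $v = s$ — are absorbed cleanly by the convention $\OPT[s] = 0$ and the reflexivity of $\ureach$, so no separate case analysis is needed.
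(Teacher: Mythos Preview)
Your proof is correct and follows exactly the route the paper intends: the corollary is stated in the paper without an explicit proof, as an immediate consequence of \cref{lem:sssp:forcing}, and what you have written is precisely the two-inequality argument that unpacks that implication.

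One cosmetic remark: in the upper-bound direction you assert that ``by the definition of $\OPT[v]$, $S$ contains some in-edge $e_k=(u,v)$ of $v$,'' which is literally true only for $v\neq s$. You acknowledge the edge case $v=s$ at the end but claim it needs no separate treatment; strictly speaking it does require one sentence (if $v=s$ and $s\ureach t$ then $\varnothing$ is already a forcing set for $\mathcal P_t$). Under the paper's standing assumption that $s$ has out-degree at least~$2$, however, $s\ureach t$ never holds, so the case is vacuous and your argument goes through as written.
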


We turn to a polynomial-time algorithm to compute the values of $\OPT$.
This immediately yields a polynomial-time algorithm for {\shortincsssp} due to \cref{cor:objective-st-path-dp}.
\begin{lemma}\label{lem:include-stpath-dp}
    For every $e=(u,v)\in E(G)$, it holds that
    \begin{align*}
        \OPT[e] = \min_{w \in V(G)}\qty{\OPT[w]: w \ureach u} + 1.
    \end{align*}
\end{lemma}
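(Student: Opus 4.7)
My plan is to establish the equality by proving both inequalities, using \cref{lem:sssp:forcing} as the pivotal tool. The key observation is that since $e = (u, v)$ ends at $v$, which is the endpoint of every path in $\mathcal{P}_v$, the edge $e$ must appear as the last edge of any path $P \in \mathcal{P}_v$ that contains it, and hence as the last edge in any forcing set $S \subseteq E(P)$ under the ordering along $P$.

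For the lower bound, I would take an optimal forcing set $S = \{e_1, \ldots, e_k\}$ realizing $\OPT[e]$, with $e_k = e$. Letting $w$ denote the head $s_{k-1}$ of $e_{k-1}$ (or $w := s$ when $k = 1$), \cref{lem:sssp:forcing} implies that $S \setminus \{e\}$ is itself a forcing set for the subpath $P_w$ of $P$ from $s$ to $w$: the conditions for $P_w$ are precisely the first $k-1$ conditions in the characterization of $S$ forcing $P$, together with the trivially satisfied $w \ureach w$. Moreover, the condition $s_{k-1} \ureach t_k$ that $S$ inherits from the lemma translates directly to $w \ureach u$. Since the last edge $e_{k-1}$ of $S \setminus \{e\}$ enters $w$, the definition $\OPT[w] = \min_{u'} \OPT[(u', w)]$ gives $|S \setminus \{e\}| \ge \OPT[w]$; in the $k = 1$ case I use $\OPT[s] = 0$ instead.

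For the upper bound, I would fix any $w$ with $w \ureach u$ and take an optimal $S''$ realizing $\OPT[w]$. By the definition of $\OPT[w]$, $S''$ contains some edge $(u', w)$ entering $w$, which is the last edge of the underlying $P_w \in \mathcal{P}_w$. Concatenating $P_w$ with the unique $w$-$u$ path (guaranteed by $w \ureach u$) and then with $e$ yields a path $P \in \mathcal{P}_v$, and $S := S'' \cup \{e\}$ forces $P$: the conditions inherited from $S''$ forcing $P_w$ cover the initial clauses, while the two remaining clauses collapse to $w \ureach u$ (given) and $v \ureach v$ (trivial). The degenerate case $w = s$ is handled directly by $S = \{e\}$ together with the unique $s$-$u$ path.

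The step I expect to require the most care is the boundary case $k = 1$ (equivalently $w = s$): here no $e_{k-1}$ exists, so I must set $w := s$ by convention and appeal to $\OPT[s] = 0$. A related subtlety is the observation that every set realizing $\OPT[w]$ necessarily contains an edge entering $w$, which both legitimates the splice $P_w \cdot Q \cdot e$ and ensures that the conditions for $S''$ forcing $P_w$ line up with those for $S$ forcing $P$. Beyond this bookkeeping, the argument is a direct application of \cref{lem:sssp:forcing}.
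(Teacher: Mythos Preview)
Your proposal is correct and follows essentially the same approach as the paper: both directions are proved via \cref{lem:sssp:forcing}, using that $e$ is necessarily the last edge of $S$ along $P$, splitting off the head $w$ of the penultimate edge (or $w=s$) for the lower bound, and concatenating $P_w$ with the unique $w$--$u$ path and $e$ for the upper bound. One minor wording issue: it is not that \emph{every} forcing set of size $\OPT[w]$ for $\mathcal P_w$ contains an edge into $w$, but the definition $\OPT[w]=\min_{u'\in N^-(w)}\OPT[(u',w)]$ guarantees you can \emph{choose} one that does, which is exactly what the paper does and all your argument needs.
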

\begin{proof}
    Let $S \subseteq E(G)$ be a forcing set for $P \in \mathcal P_v$ such that $|S| = \OPT[e]$ and $e \in S$.
    By~\cref{lem:sssp:forcing}, there is a vertex $w'$ in $P$, which is either $s$ or the head of an edge in $S \setminus \{e\}$ with $w' \ureach u$.
    Observe that $S \setminus \{e\}$ is a forcing set for the subpath of $P$ between $s$ and $w'$, as otherwise there are two paths from $s$ to $v$ including all edges in $S$, contradicting the uniqueness of $P$.
    Thus, we have
    \begin{align*}
        |S| =  |S \setminus \{e\}|+ 1 \ge \OPT[w'] + 1 \ge \min_{w \in V(G)}\qty{\OPT[w] : w \ureach u} + 1.
    \end{align*}

    Conversely, let $w$ be a vertex minimizing $\OPT[w]$ under the condition that $w \ureach u$ holds.
    Suppose first that $\OPT[w] = 0$.
    By definition, we have $w = s$.
    Since $s \ureach u$, there is a unique path from $s$ to $u$, which is denoted by $P_{su}$.
    By concatenating $P_{su}$ and $e$ in this order, we have a path $P$ from $s$ to $v$ in $G$.
    Observe that there is exactly one path from $s$ to $v$ passing through $e$ as $s \ureach u$ and every path containing $e$ must pass through $u$.
    Thus, $\{e\}$ is a forcing set for $P$, implying that $\OPT[e] \le \OPT[w] + 1$.

    Suppose otherwise.
    In this case, $\OPT[w] = \OPT[e']$ for some edge $e'$ incoming to $w$.
    Let $S'$ be a forcing set for $P_{sw} \in \mathcal P_w$ such that $|S'| = \OPT[w]$ and $e' \in S'$. 
    Similarly to the above case, the path obtained by concatenating $P_{sw}$, $P_{wu}$, and $e$ in this order is the unique path $P$ from $s$ to $v$ containing all edges in $S' \cup \{e\}$.
    Thus, $S' \cup \{e\}$ is a forcing set for $P$, implying that $\OPT[e] \le \OPT[w] + 1$.
\end{proof}

Now, we describe our dynamic programming algorithm to compute the values of $\OPT$.
We first decide whether $u \ureach v$ holds for each pair of vertices $u, v \in V(G)$.
This can be done in total time $\bigoh(nm)$ for all vertex pairs in $G$.
We can evaluate $\OPT[e]$ and $\OPT[v]$ for each $e \in E(G)$ and each $v \in V(G)$ in time $\bigoh(n)$, assuming that these values are evaluated in a dynamic programming manner.
By \cref{cor:objective-st-path-dp}, we can compute a minimum forcing set for $\mathcal P_t$ in time $\bigoh(nm)$.

It is easy to extend our algorithm to compute a minimum forcing set for a specific path $P$ by only evaluating $\OPT[e]$ and $\OPT[v]$ for each edge $e$ and vertex $v$ on the path $P$.
Therefore, we have the following theorem.

\begin{theorem}\label{thm:inclusion-forcing-set-stpath}
    \shortincsssp{} can be solved in time $\bigoh(nm)$. 
    Moreover, when additionally given a shortest $s$-$t$ path $P$ in $G$, we can compute a minimum forcing set for $P$ in time $\bigoh(nm)$ as well.
\end{theorem}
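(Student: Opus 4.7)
The plan is to turn the structural results established in this subsection into a direct dynamic-programming procedure and then to account for its running time. I would proceed in four stages: (i) reduce the input to an unweighted DAG $G'$ via \cref{obs:shortest-path-dag} together with a single shortest-path computation from $s$; (ii) precompute the relation $\ureach$ on every ordered pair of vertices of $G'$; (iii) evaluate $\OPT[v]$ and $\OPT[e]$ in a topological order of $G'$ using the recurrence of \cref{lem:include-stpath-dp}; and (iv) read off the answer through \cref{cor:objective-st-path-dp}, recovering a witnessing forcing set by standard back-pointers.

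For step (i), Dijkstra's algorithm (with a Fibonacci heap) computes $d_s$ in $\bigoh(m + n\log n)$ time; keeping only edges $e = (u,v)$ satisfying $d_s(v) = d_s(u) + w(e)$ and only vertices lying on some $s$-to-$t$ path yields an acyclic $G'$ whose $s$-$t$ paths are precisely the shortest $s$-$t$ paths of $G$. For step (ii), I would exploit that in a DAG the number of directed paths from a fixed source $u$ to every other vertex can be computed by a single topological-order pass; since the recurrence only needs to know whether this number is $0$, $1$, or at least $2$, path counts can be capped at $2$, keeping each source to $\bigoh(n + m)$ time and the total for all sources to $\bigoh(nm)$.

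For step (iii), I would process each vertex $v$ in topological order and compute $\OPT[e]$ for every incoming edge $e = (u,v)$ by scanning all $w \in V(G')$ with $w \ureach u$ and taking the minimum of $\OPT[w]$ as prescribed by \cref{lem:include-stpath-dp}; this costs $\bigoh(n)$ per edge and $\bigoh(nm)$ in total, after which $\OPT[v]$ is the minimum of $\OPT[e]$ over incoming edges. For the ``moreover'' clause, once a shortest $s$-$t$ path $P$ is fixed, we simply restrict the DP to the edges and vertices appearing on $P$ (while still letting the predecessor $w$ in the recurrence range over $V(G')$), and the overall cost is again dominated by the $\bigoh(nm)$ precomputation of $\ureach$. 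The only genuinely delicate point is this precomputation: storing exact path counts could blow up exponentially, but capping at $2$ preserves all information the DP needs. Beyond that, the remaining work is a routine verification that processing in topological order of $G'$ respects all dependencies in \cref{lem:include-stpath-dp} and that \cref{lem:sssp:forcing,lem:include-stpath-dp,cor:objective-st-path-dp} together certify that the algorithm outputs a minimum forcing set.
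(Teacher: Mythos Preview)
Your proposal is correct and mirrors the paper's own argument almost exactly: reduce to a DAG via \cref{obs:shortest-path-dag}, precompute $\ureach$ in $\bigoh(nm)$ time (the paper also states this bound without spelling out your capping-at-$2$ trick), run the DP of \cref{lem:include-stpath-dp} in topological order at $\bigoh(n)$ per edge, and finish with \cref{cor:objective-st-path-dp}. One small slip in the ``moreover'' part: once a specific path $P$ is fixed, the predecessor $w$ in the recurrence must also be restricted to $V(P)$, not all of $V(G')$---otherwise $\OPT[w]$ may correspond to a forcing set for some $s$-$w$ path other than the prefix of $P$, and the assembled set need not lie inside $P$; restricting $w$ to $V(P)$ fixes this without affecting the $\bigoh(nm)$ bound, and this is what the paper does when it says to evaluate $\OPT$ only for edges and vertices on $P$.
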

It is not hard to see that, applying a standard trace-back technique, we can find a minimum forcing set for shortest $s$-$t$ paths within the same running time.
We would like to mention that our algorithm also works for undirected graphs since \cref{obs:shortest-path-dag} also holds for undirected graph $G$.

\subsection{Anti-forcing set}\label{subsec:exclude-stpath}
We next consider the complexity of \shortexcsssp{}.
In contrast to \shortincsssp, \shortexcsssp{} is \NP-complete even for undirected and unweighted graphs.
\begin{theorem}\label{thm:exclude-nph}
    \shortexcsssp{} is \NP-complete even for undirected and unweighted graphs.
    Moreover, \shortexcsssp{} is W[1]-hard with respect to $d$, where $d$ is the number of edges in a shortest path from $s$ to $t$ in the input undirected and unweighted graph.
\end{theorem}
We perform a parameterized reduction from \mcbiclique{}, which is defined as follows.
\begin{tcolorbox}
    \begin{description}
      \setlength{\itemsep}{0pt}
      \item[Problem:] \mcbiclique{}
      \item[Input:] A bipartite graph $G = (A \uplus B, E)$ and partitions $(A_1, \dots, A_k)$ of $A$ and $(B_1, \dots, B_k)$ of $B$ such that $|A_i| = |B_i| = n$ for every $1 \leq i \leq k$.
      \item[Goal:] Decide whether $G$ has a $k \times k$ \emph{multicolored biclique}, that is, a $k \times k$ biclique that contains exactly one vertex from each set $A_i$ and exactly one vertex from each set $B_i$.
    \end{description}
\end{tcolorbox}
This problem is NP-complete~\cite{GareyJ79} and W[1]-hard when parameterized by $k$~\cite[Exercise 13.3]{CyganFKLMPPS15} (see also \cite{Lin18}).

\paragraph{Construction}

Let $G = (A \uplus B, E)$ be a bipartite graph and let $(A_1, \dots, A_k)$ and $(B_1, \dots, B_k)$ be partitions of $A$ and $B$, respectively, such that $|A_i| = |B_i| = n$ for every $1 \leq i \leq k$.
From this instance, we construct a graph $H$ as follows (see \cref{fig:ex-sssp-w1hard} for an illustration):
\begin{enumerate}
    \item Initially, let $H = (A \cup B, \emptyset)$;
    \item for each $A_i$, add two vertices $s_i, t_i$ to $H$ and connect them to all vertices in $A_i$;
    \item for each $B_i$, add two vertices $s_{k + i}, t_{k + i}$ to $H$ and connect them to all vertices in $B_i$;
    \item for every $1 \leq i < 2k$, add an edge between $t_i$ and $s_{i + 1}$;
    \item\label{label:excsssp:w1-hard-wrt-d:enumerate-add-parallel-paths-between-non-edges} for each pair $(a, b) \in A \times B$ with $\{a, b\} \not\in E$, add $2k(n-1) + 2$ disjoint paths between $a$ and $b$ of length $3 (k+j-i)$, where $a \in A_i$ and $b \in B_j$.
\end{enumerate}

\begin{figure}
    \centering
    \includegraphics[width=\linewidth]{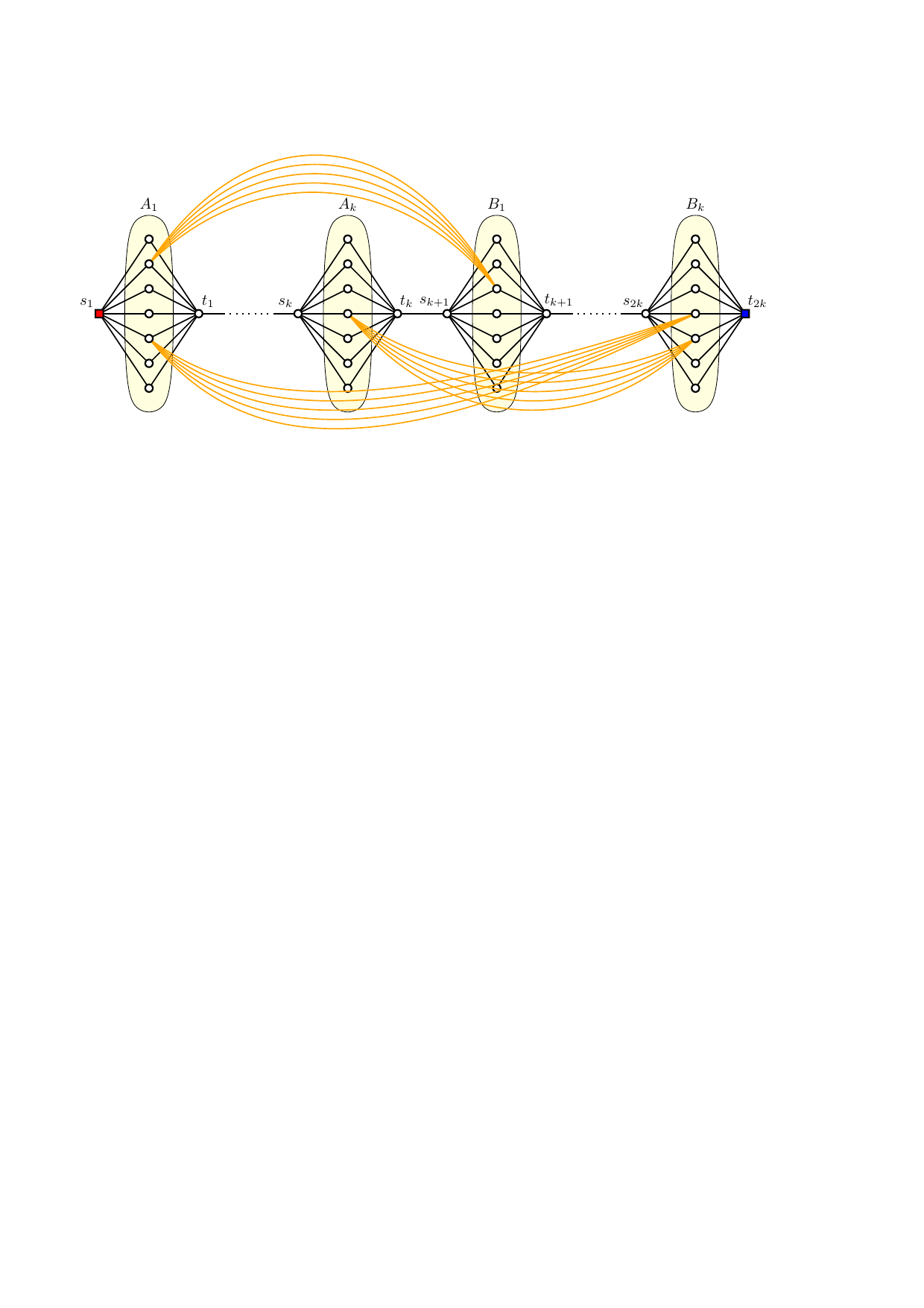}
    \caption{An illustration of the graph $H$ constructed from an instance of \mcbiclique{}.}
    \label{fig:ex-sssp-w1hard}
\end{figure}

We refer to paths added in Step~\ref{label:excsssp:w1-hard-wrt-d:enumerate-add-parallel-paths-between-non-edges} as \emph{bypasses}.
A path between $s_1$ and $t_{2k}$ in $H$ is said to be \emph{relevant} if, when it is oriented from $s_1$ to $t_{2k}$, it has neither $(u, s_i)$ nor $(t_i, u)$ for $u \in A \cup B$. 
Observe that every relevant path is a shortest path between $s_1$ and $t_{2k}$, and vice versa.
In the following we show that $G$ contains a $k \times k$ multicolored biclique if and only if there is an anti-forcing set $S \subseteq E(H)$ with $|S| \leq 2k (n-1)$ for shortest $s$-$t$ paths, which proves \cref{thm:exclude-nph}.
Note that the distance between $s_1$ and $t_{2k}$ is exactly $6k-1$, preserving the parameter.
    
\begin{lemma}
    There is a $k \times k$ multicolored biclique in $G$ if and only if there is an anti-forcing set $S \subseteq E(H)$ with $|S| \leq 2k (n-1)$.
\end{lemma}  
\begin{proof}
    Suppose that $G$ contains a $k \times k$ multicolored biclique $K \subseteq A \cup B$ such that $|K \cap A_i| = |K \cap B_i| = 1$ for $1 \le i \le k$.
    For $1 \leq i \leq k$, let $v_i \in K \cap A_i$ and $v_{k + i} \in K \cap B_i$.
    We then show that the edge set
    \begin{align*}
        S \coloneqq \left( \bigcup_{i = 1}^{k} \{\{s_i, u\}, \mid u \in A_i \setminus \{v_i\}\} \right) \cup \left( \bigcup_{i = 1}^{k} \{\{u, t_{k+i}\}, \mid u \in B_i \setminus \{v_{k+i}\}\} \right),
    \end{align*}
    whose size is $2k(n-1)$, is an anti-forcing set for shortest $s$-$t$ paths.
    Since $H-S$ contains a shortest path $P = (s_1, v_1, t_1, \dots, s_{2k}, v_{2k}, t_{2k})$, it suffices to show that no other paths exist.
    Suppose for the contrary that $H - S$ contains a relevant path $P'$ other than $P$.
    By the construction of $S$, $P'$ would have to contain a bypass.
    Let $(a, b) \in A_i \times B_j$ be the endpoints of this bypass.
    As $P'$ is relevant, $S$ does not contain edges $\{s_i, a\}$ and $\{b, t_{k+j}\}$.
    This hence implies that $a$ and $b$ are both in $K$, which indicates that no bypass between $a$ and $b$, a contradiction.

    Suppose that there is an anti-forcing set $S \subseteq E(H)$ of size at most $2k(n-1)$.
    Let $P$ be the unique relevant path between $s_1$ and $t_{2k}$ in $G - S$.
    Observe first that $P$ does not contain (an edge of) any bypass; otherwise $S$ must contain an edge in each of the other $2k(n-1) + 1$ bypasses with the same end vertices.
    Hence, $P$ visits all $s_i$'s and $t_i$'s, as well as a single vertex in each $A_i$ and $B_i$.
    Now we let $K_A = V(P) \cap A$ and $K_B = V(P) \cap B$.
    We then show that $K = K_A \cup K_B$ is a $k \times k$ biclique of $G$.
    Suppose for a contradiction that there exists $(a, b) \in K_A \times K_B$ such that $\{a, b\} \not\in E$.
    Then, $H$ contains $2k(n-1) + 2$ bypasses between $a$ and $b$.
    Since $|S| \le 2k(n-1)$, $H - S$ contains one of these bypasses, which contradicts the uniqueness of $P$.
    Hence, $K$ is a biclique of $G$.
\end{proof}
While \shortexcsssp{} is \NP-complete, we can find in polynomial time a minimum cost anti-forcing set for a given shortest $s$-$t$ path $P$ in $G$ by reducing it to the minimum multiway cut problem on directed acyclic graphs, which can be solved in polynomial time~\cite{Bentz07}.

\begin{theorem}\label{thm:specific-forcing-set-stpath}
    Let $G$ be an edge-weighted directed graph with $s, t \in V(G)$ and let $P$ be a shortest path from $s$ to $t$ in $G$.
    Then, a minimum anti-forcing set $S$ for $P$ can be computed in polynomial time.
\end{theorem}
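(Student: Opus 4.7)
Our plan is to reduce the problem to minimum directed multiway cut on a DAG, which admits a polynomial-time algorithm by Bentz~\cite{Bentz07}. We first apply \cref{obs:shortest-path-dag} to replace $G$ with its shortest-path DAG $G'$ in which every $s$-$t$ path is shortest; since edge weights are positive, the distance labels $d_s$ are strictly increasing along every edge of $G'$. Writing $P = v_0 v_1 \cdots v_k$ with $v_0 = s$ and $v_k = t$, we set $G^{\circ} \coloneqq G' - E(P)$, which is again acyclic, and designate $V(P) = \{v_0, \ldots, v_k\}$ as the terminal set. The key claim we aim to establish is that a set $S \subseteq E(G') \setminus E(P)$ is an anti-forcing set for $P$ in $G'$ if and only if $S$ is a directed multiway cut of $(G^{\circ}, V(P))$, i.e., $G^{\circ} - S$ contains no directed path from $v_i$ to $v_j$ for any $0 \le i < j \le k$.

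For the forward direction, suppose there is a directed path $Q$ from $v_i$ to $v_j$ in $G^{\circ} - S$ for some $i<j$. Concatenating $P[s, v_i]$, $Q$, and $P[v_j, t]$ produces an $s$-$t$ walk in $G' - S$; because the $d_s$-values of $Q$'s internal vertices lie strictly between $d_s(v_i)$ and $d_s(v_j)$, they cannot coincide with any vertex of $P[s, v_i]$ or $P[v_j, t]$. Hence the walk is in fact a simple $s$-$t$ path, and it is distinct from $P$ because $Q$ consists entirely of non-$P$ edges. Conversely, if $Q' \neq P$ is an $s$-$t$ path in $G' - S$, let $v_i$ be the last vertex such that $P[s, v_i]$ is a prefix of $Q'$, and let $v_j$ be the first vertex of $V(P)$ visited by $Q'$ after $v_i$. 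By the maximality of $i$ we have $j > i$, and the sub-path of $Q'$ from $v_i$ to $v_j$ avoids $V(P) \setminus \{v_i, v_j\}$ entirely, so in particular it uses no $P$-edge and yields a $v_i$-$v_j$ path in $G^{\circ} - S$. Combining the two directions, the minimum anti-forcing sets for $P$ are exactly the minimum directed multiway cuts on $(G^{\circ}, V(P))$, which can be computed in polynomial time by invoking Bentz's algorithm~\cite{Bentz07} on the DAG $G^{\circ}$.

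We expect the main subtlety to lie in the forward direction: one needs the concatenation $P[s, v_i] \cdot Q \cdot P[v_j, t]$ to be a genuine simple $s$-$t$ path rather than merely a walk witnessing reachability. The preprocessing from \cref{obs:shortest-path-dag}, which makes $d_s$ strictly increase along every edge of $G'$, is precisely what prevents $Q$ from looping back onto $P[s, v_i]$ or jumping ahead into $P[v_j, t]$, so the lifted walk is automatically simple. Without this preprocessing the reduction would need extra care, since detours in the original graph $G$ could in principle interact with $P$ in more complex ways.
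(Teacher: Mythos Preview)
Your proposal is correct and follows essentially the same route as the paper: reduce to the shortest-path DAG via \cref{obs:shortest-path-dag}, set $H = G' - E(P)$ with terminal set $V(P)$, show that anti-forcing sets for $P$ coincide with directed multiway cuts of $(H,V(P))$, and invoke Bentz's polynomial-time algorithm for DAGs. The only difference is that you spell out why the concatenation $P[s,v_i]\cdot Q\cdot P[v_j,t]$ is a simple path (using the strictly increasing distance labels), whereas the paper leaves this implicit in the DAG structure; this extra care is fine and does not change the argument.
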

\begin{proof}
	We reduce the problem to \mwcut{} on directed acyclic graphs.
	Let $H$ be a directed graph and let $T\subseteq V(H)$ be a terminal set. A \emph{multiway cut} for $T$ is a set $S \subseteq E(H)$ such that for any terminal pair $s, t \in T$, there is no path from $s$ to $t$ in $H-S$.
    Given a directed graph $H$ and $T \subseteq V(H)$, the task of \mwcut{} is to compute a minimum cardinality multiway cut of $(H, T)$.

	Due to \cref{obs:shortest-path-dag}, we can assume that the given graph $G$ is a directed acyclic graph and the paths from $s$ to $t$ are exactly the shortest $s$-$t$ paths in the original graph.
	We show that $S$ is an anti-forcing set for $s$-$t$ paths of $G$ if and only if $S$ is a multiway cut for $V(P)$ on $H \coloneqq G-E(P)$.

    Suppose that $S \subseteq E(G)$ is an anti-forcing set for $P$, that is, $P$ is the unique path from $s$ to $t$ with $S \cap E(P) = \varnothing$.
	Suppose that there is a path $P'$ from $u \in V(P)$ to $v \in V(P)$ in $H$.
    We then obtain an $s$-$t$ path $P''\; (\neq P)$ from $P$ by replacing the subpath between $u$ and $v$ with $P'$.
    Since $S$ is an anti-forcing set, $S$ contains at least one edge of $P''$, and thus, $E(P') \cap S \neq \varnothing$.
    Hence, $S$ is a multiway cut for $V(P)$ in $H$.

    Conversely, suppose that $S$ is a multiway cut for $V(P)$ in $H$.
    Since every path from $u$ to $v$ for any distinct $u, v \in V(P)$ in $H$ contains an edge in $S$, $P$ is the unique path from $s$ to $t$ in $G - S$, meaning that $S$ is an anti-forcing set for~$P$.
        
    It is known that \mwcut{} can be solved in polynomial time on directed acyclic graphs~\cite{Bentz07}.
	Thus, a minimum anti-forcing set $S$ for $P$ can be computed in polynomial time as well.
\end{proof}

We next discuss the parameterized complexity of \shortexcsssp{} when parameterized by the solution size~$k$.
Although we are unable to resolve this question fully, we can show that it is fixed-parameter tractable when additionally parameterizing the distance $d$ from $s$ to $t$ in the input graph.
We below give two algorithms with incomparable running times.

\begin{theorem}\label{thm:excsssp:d^k}
    \shortexcsssp{} can be solved in time $d^kn^{\bigoh(1)}$, where $d$ is the maximum number of edges in a shortest path from $s$ to $t$ in $G$.
\end{theorem}
\begin{proof}
    By~\cref{obs:shortest-path-dag}, we can assume that $G$ is acyclic and all the paths from $s$ to $t$ are shortest paths in $G$.
    Thus, it suffices to find a smallest edge set $S \subseteq E(G)$ such that $G - S$ has a unique path from $s$ to $t$, which corresponds to an anti-forcing set in the original graph.
    \begin{algorithm}[tb]
        \caption{A $\bigoh^*(d^k)$-time algorithm for \shortexcsssp{}.}
        \label{alg:anti-sssp1}
        \KwIn{A directed acyclic graph $G$, $s, t \in V(G)$, and a non-negative integer $k$.}
        \KwOut{Decide if $G$ has an edge set $S \subseteq E(G)$ such that $G - S$ has a unique path from $s$ to $t$.}
        \SetKwProg{Fn}{Function}{:}{}
        \SetKwFunction{FnBranch}{branching}

        \Fn{\FnBranch{$G, k$}}{
            Let $P$ be an arbitrary path from $s$ to $t$ in $G$\;
            Let $S$ be a minimum anti-forcing set for $P$ computed by the algorithm in \cref{thm:specific-forcing-set-stpath}\;
            \If{$|S| \le k$}{
                \Return \texttt{true}\;
            }
            \If{$k > 0$}{
                \For{$e \in E(P)$} {
                    \If{{\FnBranch{$G - e, k - 1$}}}{
                        \Return \texttt{true}\;
                    }
                }
            }
            \Return \texttt{false}\;
        }
    \end{algorithm}
    The branching algorithm \FnBranch{$G, k$} described in \Cref{alg:anti-sssp1} correctly decides whether $G$ has such a set of size at most~$k$.
    
    Let $(G, k)$ be the input of this branching algorithm.
    Suppose that $G$ has an anti-forcing set $S^*$ of size at most $k$.
    Let $P$ be an arbitrary path from $s$ to $t$ in $G$.
    If $S^*$ is disjoint from $E(P)$, we can find a smallest anti-forcing set $S$ in polynomial time using \cref{thm:specific-forcing-set-stpath}.
    Note that $S = \emptyset $ when $P$ is already unique in $G$.
    Otherwise, $P$ contains at least one edge of $S^*$.
    In this case, the budget~$k$ must be positive, and we then branch all edges $e$ in $P$ and seek a smallest anti-forcing set of $(G - e, k - 1)$.
    Thus, this branching algorithm correctly decides whether $G$ has an anti-forcing set of size at most~$k$.

    The running time of \Cref{alg:anti-sssp1} is readily bounded by $d^kn^{\bigoh(1)}$ as every path from $s$ to $t$ in $G$ has at most $d$ edges. 
\end{proof}

\begin{theorem}\label{thm:excsssp:k^d}
    \shortexcsssp{} can be solved in time $(k + 1)^dn^{\bigoh(1)}$, where $d$ is the maximum number of edges in a shortest path from $s$ to $t$ in $G$.
\end{theorem}
To prove \cref{thm:excsssp:k^d}, we need several lemmas.
Similarly to the previous proof, we assume that $G$ is acyclic, and all the paths from $s$ to $t$ are shortest in $G$.
Let $S^*$ be a smallest anti-forcing set in $G$ and let $P^*$ be the unique path in $G - S^*$.
Let $\lambda_G(u, v)$ be the maximum number of edge-disjoint paths from $u$ to $v$ in $G$.
     
\begin{lemma}\label{lem:large-cut}
    Let $u$ and $v$ be distinct vertices of $P^*$ with $\dist(s, u) < \dist(s, v)$.
    If $\lambda_G(u, v) > k + 1$, then $G$ has no anti-forcing set of size at most $k$.
\end{lemma}
\begin{proof}
     Since $P^*$ is the unique path from $s$ to $t$ in $G - S^*$, the subpath from $u$ to $v$ is also unique.
     This means that $\lambda_{G - S^*}(u, v) = 1$.
    As deleting a single edge decreases $\lambda_{G}(u, v)$ by at most~1, $S^*$ must contain at least $k + 1$ edges. 
\end{proof}
    
Our algorithm constructs a set of paths $\mathcal P_{st}$ that contains the unique path $P^*$ from $s$ to $t$ in $G - S^*$ for every anti-forcing set $S^*$ of size at most~$k$.
Moreover, $\mathcal P_{st}$ contains no more than $(k + 1)^d$ paths.
This proves \cref{thm:excsssp:k^d} due to \cref{thm:specific-forcing-set-stpath}.
Let $(x, y)$ be an edge of $G$ and let $\mathcal P_{x}$ and $\mathcal P_{y}$ be sets of paths ending at $x$ and starting at $y$, respectively.
We denote by $\mathcal P_x \otimes \mathcal P_y$ the set of paths obtained by concatenating every pair of paths in $\mathcal P_x$ and in $\mathcal P_y$.
The algorithm for constructing $\mathcal P_{st}$, shown in \Cref{alg:anti-sssp2}, computes a minimum $s$--$t$ cut in $G$ and, for each edge $(x, y)$ in the cutset, recursively constructs sets of paths $\mathcal P_{sx}$ and $\mathcal P_{yt}$.
Then, we add $\mathcal P_{sx} \otimes \mathcal P_{yt}$ to $\mathcal P_{st}$.
    \begin{algorithm}[tb]
        \caption{A divide-and-conquer algorithm for constructing $\mathcal P_{st}$.}
        \label{alg:anti-sssp2}
        \SetKwProg{Fn}{Function}{:}{}
        \SetKwFunction{FnConstruct}{construct}

        \Fn{\FnConstruct{$G, k, u, v$}}{
            \If{$u = v$}{
                \Return A singleton path containing $u$\;
            }
            Let $C \subseteq E(G)$ be the cutset of a minimum cut separating $u$ and $v$ in $G$\; \label{line:sssp2:cut}
            \If{$|C| > k + 1$}{
                \Return $\emptyset$\;
            }

            $\mathcal P_{uv} \leftarrow \emptyset$\;
            \For{$(x, y) \in C$}{
                Let $\mathcal P_{ux} \leftarrow$ \FnConstruct($G, k, u, x$)\;
                Let $\mathcal P_{yv} \leftarrow $ \FnConstruct($G, k, y, v$)\;
                $\mathcal P_{uv} \leftarrow \mathcal P_{uv} \cup (\mathcal P_{ux} \otimes \mathcal P_{yv})$\;
            }
            \Return $\mathcal P_{uv}$;
        }
    \end{algorithm}

Let $\mathcal P_{st}$ be the set of paths obtained by calling \FnConstruct{$G, k, s, t$}.
For $u, v \in V(G)$, let $d(u, v)$ be the maximum number of edges in a shortest path from $u$ to $v$ in $G$.
\begin{lemma}\label{lem:k^d:optimality}
    Let $S^*$ be an arbitrary anti-forcing set of size at most $k$ and let $P^*$ be the unique path from $s$ to $t$ in $G - S^*$.
    Then, $P^* \in \mathcal P_{st}$.
\end{lemma}
\begin{proof}
    It suffices to show the following claim: if \FnConstruct{$G, k, u, v$} is called for some $u, v \in V(P^*)$, then the returned set $\mathcal P_{uv}$ always contains the subpath $P^*_{uv}$ of $P^*$ from $u$ to $v$.
    We prove it by induction on $d(u, v)$.
    
    The base case, $u = v$, is clear.
    Suppose now that the claim holds for all pairs of vertices $a,b$ with $d(a,b) < d(u,v)$.
    By~\cref{lem:large-cut}, $\lambda_G(u, v) \le k + 1$ holds.
    Let $C$ be the cutset computed in line~\ref{line:sssp2:cut}.
    As there is a path from $u$ to $v$ in $G$, $C$ contains at least one edge in $P$.
    For such an edge $(x, y) \in C \cap E(P)$, the algorithm constructs $\mathcal P_{ux}$ and $\mathcal P_{yv}$.
    As $d(u, x) + d(y, v) < d(u, v)$, it holds that $P^* \in \mathcal P_{ux}$ and $P^*_{yv} \in \mathcal P_{yv}$ due to the induction hypothesis.
    This implies that $P^*_{uv} \in \mathcal P_{ux} \otimes \mathcal P_{yv} \subseteq \mathcal P_{uv}$.
\end{proof} 

\begin{lemma}\label{lem:k^d:running time}
    \Cref{alg:anti-sssp2} returns a set of paths $\mathcal P_{st}$ in $(k + 1)^dn^{\bigoh(1)}$ time.
    In particular, $\mathcal P_{st}$ contains at most~$(k + 1)^d$ paths.
\end{lemma}
\begin{proof}
    It suffices to show the following claim: For vertices $u$ and $v$, the set $\mathcal P_{uv}$, obtained by calling \FnConstruct{$G, k, u, v$}, contains at most $(k + 1)^{d(u, v)}$ edges.
    We prove it by induction on $d(u, v)$.
    
    The base case, $u = v$, is clear.
    Thus, suppose that the claim holds for all pairs of vertices $a,b$ with $d(a,b) < d(u,v)$.
    We can assume that the cutset $C$ computed at line~\ref{line:sssp2:cut} contains at most $k + 1$ edges; otherwise the claim is trivial.
    For each $(x, y) \in C$, we have $d(u, x) + d(y, v) < d(u, v)$.
    Applying the induction hypothesis, we have $|\mathcal P_{ux}| \le (k + 1)^{d(u, x)}$ and $|\mathcal P_{yv}| \le (k + 1)^{d(y, v)}$.
    Thus, we have
    \begin{align*}
        |\mathcal P_{uv}| &= |\bigcup_{(x, y) \in C}\mathcal P_{ux} \otimes \mathcal P_{yv}|\\
        &\le \sum_{(x, y) \in C}|\mathcal P_{ux} \otimes \mathcal P_{yv}|\\
        &\le \sum_{(x, y) \in C} (k + 1)^{d(u, x) + d(y, v)}\\
        &\le (k + 1) \cdot (k + 1)^{d(u, v) - 1}\\
        &\le (k + 1)^{d(u, v)}.
    \end{align*}
    As we can compute a minimum $s$-$t$ cut in polynomial time, the claim for the running time immediately follows.
\end{proof}

We are ready to describe our main algorithm for \cref{thm:excsssp:k^d}.
We first construct a set $\mathcal P_{st}$ of paths from $s$ to $t$ in $G$ using \Cref{alg:anti-sssp2}.
Then, for each $P \in \mathcal P_{st}$, we compute in polynomial time a smallest set $S \subseteq E(G)$ such that $P$ is the unique path from $s$ to $t$ in $G - S$ using \cref{thm:specific-forcing-set-stpath}.
If there is a path $P \in \mathcal P_{st}$ such that the corresponding set $S$ has at most~$k$ edges, we can indeed conclude that $G$ has an anti-forcing set of size at most~$k$; otherwise $G$ has no such a set due to \cref{lem:k^d:optimality}.
By~\cref{lem:k^d:running time}, the entire algorithm runs in time $(k + 1)^dn^{\bigoh(1)}$.

Finally, we observe that the problem can be solved in linear time on graphs of bounded treewidth.
Since it is not necessary to discuss what treewidth is in the proof, we omit its definition in this paper (see e.g.~\cite[Chapter 7]{CyganFKLMPPS15}).
We observe that \shortexcsssp{} is reduced to one for evaluating a monadic second-order ($\MSO_2$) formula, and thus, by Courcelle's theorem~\cite{CourcelleM93,ArnborgLS91} and Bodlaender's algorithm~\cite{Bodlaender96},
it can be solved in linear time on graphs of bounded treewidth.
\begin{theorem}\label{thm:treewidth-exclude-stpath}
	Let $G$ be a directed graph with two specified vertices $s$ and $t$ such that the underlying graph of $G$ has bounded treewidth.
	Then, a minimum anti-forcing set $S$ for shortest $s$-$t$ paths on $G$ can be computed in linear time.
\end{theorem}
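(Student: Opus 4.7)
The plan is to invoke Courcelle's theorem together with its optimization extension. First, I would apply \cref{obs:shortest-path-dag} to reduce to the case where $G$ is a directed acyclic graph whose directed $s$-$t$ paths are exactly the shortest $s$-$t$ paths of the original input. Since this reduction only deletes edges and vertices, the underlying graph of the reduced digraph is a subgraph of the original and thus still has bounded treewidth; Bodlaender's algorithm~\cite{Bodlaender96} then produces a tree decomposition of width bounded by a constant in linear time.

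Next, I would express the property ``$S \subseteq E(G)$ is an anti-forcing set for shortest $s$-$t$ paths'' as a single $\MSO_2$ formula $\varphi(S)$. As a building block, I would write an auxiliary formula $\textsf{Path}(P, S)$ asserting that the edge set $P$ forms a directed $s$-$t$ path in $G - S$: we require that $P \cap S = \varnothing$, that each vertex has at most one in-coming and at most one out-going $P$-edge, that $s$ has no in-coming and $t$ has no out-going $P$-edge, that every other vertex incident to some edge of $P$ has exactly one incoming and one outgoing $P$-edge, and that the set of $P$-incident vertices is connected in the underlying undirected sense. All these conditions are standard $\MSO_2$-expressible. I would then put
\[
\varphi(S) \;\equiv\; \exists P\,\bigl[\textsf{Path}(P, S) \land \forall P'\,\bigl(\textsf{Path}(P', S) \to P = P'\bigr)\bigr].
\]

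Finally, I would apply the optimization version of Courcelle's theorem (the LinEMSO metatheorem~\cite{CourcelleM93,ArnborgLS91}) to compute, in linear time, a minimum cardinality set $S$ satisfying $\varphi(S)$. Combined with the linear-time reduction above and the linear-time tree decomposition, this yields the claimed linear-time algorithm.

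The main obstacle lies in crafting $\textsf{Path}$ carefully: the directedness of $G$ must be captured through the head/tail incidence relation that $\MSO_2$ sees, and one must ensure that the witness $P$ encodes a simple directed path from $s$ to $t$ rather than, say, an $s$-$t$ path together with a disjoint directed cycle or a reversed path. Once this formula is written down correctly, the rest is a routine invocation of algorithmic metatheorems for graphs of bounded treewidth.
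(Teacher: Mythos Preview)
Your proposal is correct and follows essentially the same route as the paper: reduce via \cref{obs:shortest-path-dag} to a DAG whose underlying graph is a subgraph of the original (hence still of bounded treewidth), express ``there is a unique directed $s$-$t$ path avoiding $S$'' in $\MSO_2$, and invoke the optimization variant of Courcelle's theorem together with Bodlaender's algorithm. The only cosmetic difference is that the paper writes out an explicit $\texttt{Path}_{s,t}$ formula (via a degree-constraint predicate plus minimality) rather than appealing to connectivity of the incident vertex set, but both encodings are standard and equivalent for the purpose at hand.
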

\begin{proof}
Due to \cref{obs:shortest-path-dag}, it is sufficient to compute, for directed acyclic graph $G'$, a minimum edge set $S$ such that $G'-S$ has a unique $s$-$t$ path.
We express this condition by an $\MSO_2$-formula over directed graphs with two constant vertices $s$ and $t$.
In $\MSO_2$ logic, we can quantify edges, vertices, vertex sets, and edge sets, and we can use the standard (logical) connectives `$=$', `$\land$', `$\lnot$', and so on~(for a more formal definition, see e.g., ~\cite{CyganFKLMPPS15}).
For edge variable $e=(u,v)$, we use $\lhead(e)$ and $\ltail(e)$ that are functions returning $v$ and $u$, respectively\footnote{Formally, we should encode these functions as a relation like $\lhead(e,v)$ and $\ltail(e,u)$. However, for simplicity, we use the function notation here, and it is easy to translate to the formal relation-based notation.}.

We use $\exists! X\; \phi(X)$ as an abbreviation for $\exists X\; \forall Y\; [X=Y \leftrightarrow \phi(Y)]$.
Note that $\exists! X\; \phi(X)$ is uniqueness quantification, that is, it can be interpreted as meaning ``there is exactly one $X$ such that $\phi(X)$''.
Now, we consider the following $\MSO_2$ formula\footnote{We use abbreviated notations such as $\subseteq$, $\cap$, $=\varnothing$, $\forall x \in X \phi(X)$, but these are interpreted in the standard way and can be defined in $\MSO_2$ logic.}
\begin{align*}
\phi(X) &\equiv X \subseteq E \land \exists! P \;(P \subseteq E \land P \cap X = \varnothing \land \texttt{Path}_{s,t}(P)).
\end{align*}
Here, $\texttt{Path}_{s,t}(P)$ is a formula that expresses that a directed edge set $P$ is an $s$-$t$ path.
It is known that the formula $\texttt{Path}_{s,t}(P)$ can be defined by an $\MSO_2$-formula~(see e.g., ~\cite{Courcelle97}), which is given below for completeness.
\begin{gather*}
\texttt{Path}_{s,t}(P)  \equiv \texttt{QPath}_{s,t}(P) \land \forall P' \subseteq P\; (P = P' \lor \lnot \texttt{QPath}_{s,t}(P')),\\
    \texttt{QPath}_{s,t}(P)  \equiv  \begin{pmatrix}
    \exists! e\; (e \in P \land s = \ltail(e))\land \exists! f\; (f \in P \land t = \lhead(f)) \land {}\\
    \forall e \in P \; \lnot (s = \lhead(e) \lor t =\ltail(e)) \land {}\\
    \forall e \in P \; \begin{pmatrix}
    (\lhead(e) \neq t) \leftrightarrow \exists! f \;(f \in P \land \lhead(e) = \ltail(f))  & \land\\
    (\ltail(e) \neq s) \leftrightarrow \exists! f \;(f \in P \land \ltail(e) = \lhead(f)) 
    \end{pmatrix}
\end{pmatrix}.
\end{gather*}
Here, the condition $\texttt{QPath}_{s,t}(P)$ ensures that $P$ contains an $s$-$t$ path, but $P$ may additionally contain vertex-disjoint cycles.
Therefore, the minimality condition $\forall P' \subseteq P\; (P = P' \lor \lnot \texttt{QPath}_{s,t}(P'))$ guarantees that $P$ is exactly a directed $s$-$t$ path.

It is known that a set $S$ that satisfies the property $\phi(S)$ defined by a constant size $\MSO_2$-formula $\phi$ and minimizes its cardinality $|S|$ can be found in linear time on directed graphs such that the underlying graph has bounded treewidth~\cite{Bodlaender96,CourcelleM93,ArnborgLS91}.
Since the underlying graph $G''$ of $G'$ is a subgraph of $G$, $G''$ has bounded treewidth.
Thus, the claim holds.
\end{proof}
Note that we use the cardinality of an edge-set variable $X$ as an objective in the proof of \cref{thm:treewidth-exclude-stpath}, and thus we cannot directly extend the result for bounded clique-width since $\MSO_2$ model checking is hard even for cliques~\cite{CourcelleMR00}.

\printbibliography
\end{document}